\documentclass[11pt]{article}
\usepackage[margin=1in]{geometry}
\usepackage{amsmath,amsthm,mathtools}
\usepackage{fontspec}
\usepackage{unicode-math}
\usepackage{microtype}
\usepackage{setspace}
\usepackage{graphicx}
\usepackage{float}
\usepackage{booktabs,longtable,array,calc,tabularx}
\usepackage{enumitem}
\usepackage{caption}
\usepackage[hidelinks]{hyperref}
\usepackage{fancyhdr}
\usepackage{xcolor}
\usepackage{xurl}
\usepackage{etoolbox}

\graphicspath{{figures/}}
\setlist[itemize]{leftmargin=1.6em,itemsep=2pt,topsep=4pt}
\setlist[enumerate]{leftmargin=1.8em,itemsep=2pt,topsep=4pt}

\newcommand{\E}{\mathbb E}
\newcommand{\Cov}{\operatorname{Cov}}
\newcommand{\TV}{\operatorname{TV}}
\newcommand{\logit}{\operatorname{logit}}
\newcommand{\expit}{\operatorname{expit}}

\newtheoremstyle{paperplain}
  {7pt}{7pt}{\itshape}{} {\bfseries}{.}{0.5em}{}
\theoremstyle{paperplain}
\newtheorem{proposition}{Proposition}
\newtheorem{corollary}{Corollary}[proposition]
\theoremstyle{remark}

\hypersetup{
  pdftitle={Beyond Actor-Partner Associations: A Target-Trial Framework for Causal Spillover Effects in Longitudinal Dyadic Data},
  pdfauthor={Subir Hait},
  pdfsubject={Causal inference for longitudinal dyadic data},
  pdfkeywords={actor-partner interdependence model, causal inference, dyadic data, interference, target trial, joint propensity score}
}
\begin{document}
\begin{titlepage}
\thispagestyle{empty}
\centering
\vspace*{1.25in}
{\LARGE\bfseries Beyond Actor-Partner Associations:\\[0.25em]
A Target-Trial Framework for Causal Spillover Effects\\[0.25em]
in Longitudinal Dyadic Data\par}
\vspace{1.0em}
{\large\itshape Joint Treatment Modeling, Doubly Robust Estimation, and an Application to Retirement Transitions in the Health and Retirement Study\par}
\vspace{2.4in}
{\large\bfseries Subir Hait\par}
\vspace{0.35em}
{Measurement and Quantitative Methods, Michigan State University\par}
\vfill
\end{titlepage}

\clearpage
\thispagestyle{fancy}
\begin{center}
{\large\bfseries Abstract}
\end{center}
\begin{quotation}\small
Actor-partner interdependence models (APIMs) describe how one member's exposure is associated with that member's and the partner's outcomes, but actor and partner coefficients are causal only under additional design and identification assumptions. We formulate APIM-style questions as dyadic target trials under within-dyad interference and make three links between the associational and causal formulations explicit. First, under joint exchangeability, positivity, and a homogeneous correctly specified treatment-related conditional mean, APIM actor, partner, interaction, and joint coefficients equal the corresponding causal contrasts. Second, for binary dyadic treatment, multiplying the true member-specific propensities induces an exact one-parameter discrepancy $\delta(H)=\Cov(A_1,A_2\mid H)$: the four joint-cell errors are $(+\delta,-\delta,-\delta,+\delta)$, total variation is $2|\delta|$, and IPW contrast bias has a closed-form representation. Third, the same discrepancy affects AIPW through its product with outcome-regression error. A 36-scenario simulation and a true-marginal factorization diagnostic show that the large separate-propensity bias under residual dependence is primarily factorization bias, while flexible AIPW reduces nonlinear-model bias but not weak-support instability. In the strongest dependence condition, factorizing even the true marginals reversed the spillover contrast and inflated the interaction to nearly five times its true value. We apply the framework to 2,400 Health and Retirement Study couples studying retirement transitions and later depressive symptoms. Joint retirement was rare, residual treatment dependence was substantial, and estimates were sensitive to support and propensity flooring. The contribution is a design-and-diagnostic bridge from APIM questions to causal dyadic analyses, not a new general interference or AIPW theory.

\medskip
\noindent\textbf{Keywords:} actor-partner interdependence model; causal inference; dyadic data; interference; target trial; joint propensity score; doubly robust estimation; Health and Retirement Study.
\end{quotation}
\clearpage
\section{Introduction}\label{introduction}

Close relationships create a basic challenge for causal inference: two
people who are often represented as separate observations can share
causes, influence one another\textquotesingle s exposures, and affect
one another\textquotesingle s outcomes. In studies of couples, spouses
share environments, resources, health shocks, caregiving
responsibilities, and life transitions. A partner\textquotesingle s
exposure can therefore be both a predictor of the other
partner\textquotesingle s outcome and a component of the intervention
itself. These features are substantively central to family, aging, and
health research, but they make the move from longitudinal association to
causal effect unusually delicate.

The actor-partner interdependence model (APIM) is a standard statistical
framework for dyadic data because it separates an
individual\textquotesingle s own-exposure association (the actor path)
from the partner-exposure association (the partner path) while
accounting for nonindependence within the dyad (Kenny, Kashy, \& Cook,
2006). Extensions include longitudinal and mediation formulations
(Ledermann, Macho, \& Kenny, 2011) and multivariate latent-variable
formulations (Kim \& Kim, 2024). APIM is exceptionally useful for
describing interdependence. Yet the labels actor effect and partner
effect are statistical labels: an APIM coefficient does not, by itself,
specify a counterfactual intervention or establish exchangeability of an
observational exposure.

The distinction is especially important when a partner coefficient is
interpreted as a spillover effect. A regression coefficient asks how a
partner\textquotesingle s measured exposure predicts a focal
member\textquotesingle s outcome after a specified adjustment. A causal
spillover contrast asks what would happen to the focal
member\textquotesingle s outcome if the partner\textquotesingle s
exposure were changed under a well-defined intervention while the focal
member\textquotesingle s own exposure were held fixed, or while both
exposures were governed by a joint policy. The two quantities coincide
only under assumptions that are stronger than those required to fit an
APIM.

Causal inference under interference supplies the appropriate
potential-outcome language. Under partial interference, treatments may
affect outcomes within a prespecified cluster but not across clusters
(Hudgens \& Halloran, 2008; Tchetgen Tchetgen \& VanderWeele, 2012). A
couple is the smallest nontrivial partial-interference cluster: the
treatment is a two-component vector and each member\textquotesingle s
potential outcome can depend on both components. Existing work has
developed inverse-probability and doubly robust estimators under partial
interference (Liu et al., 2019), dynamic regimes on dyadic networks
(Mussavi Rizi, Dubin, \& Wallace, 2024), and semiparametric estimators
under heterogeneous partial interference (Qu, Xiong, Liu, \& Imbens,
2026). These contributions establish that direct and spillover estimands
and efficient AIPW estimators are not new in themselves.

A separate challenge is longitudinal treatment-confounder feedback.
Prior depression, work status, physical health, income, or caregiving
can affect later retirement or behavior; earlier exposure can then alter
the same variables that predict subsequent exposure and outcome.
Conventional regression that conditions on a treatment-induced
confounder can block part of the causal pathway or induce other forms of
bias. Longitudinal g-methods were developed precisely for such settings
(Robins, Hernan, \& Brumback, 2000), but this logic is rarely made
explicit when longitudinal APIMs are interpreted causally.

The target-trial framework offers a design-first way to connect these
literatures. Rather than beginning with a regression equation, the
analyst specifies the hypothetical trial that would answer the causal
question: eligibility, treatment strategies, assignment, time zero,
follow-up, outcome, causal contrast, and analysis plan (Hernan \&
Robins, 2016; Hernan, Dahabreh, Dickerman, \& Swanson, 2025). The TARGET
reporting statement similarly emphasizes transparent separation of the
target trial, observational emulation, and analysis (Cashin et al.,
2025). For dyadic data, this design step forces a question that is easy
to miss in an APIM: is treatment assigned to each person separately, or
is the scientific intervention actually a joint dyadic state?

This paper develops a practical bridge from APIM-style scientific
questions to causal dyadic estimands. The central claim is deliberately
narrow. We make three links explicit so that the bridge is auditable: a
coefficient-equivalence result for the homogeneous APIM special case; an
exact decomposition of the error created by factorizing a binary dyadic
treatment law into its two true marginals; and the corresponding AIPW
error identity showing how treatment-law error interacts with
outcome-model error. These identities are elementary consequences of the
four-state treatment representation, but they organize the causal
argument and separate factorization error from generic nuisance
misspecification. We then use simulations to verify the implications, a
target-trial protocol to expose design gaps, and an HRS application to
show how residual treatment dependence and joint positivity enter an
applied analysis. The paper does not propose a new general theory of
interference or a new AIPW estimator.

\subsection{Statistical interdependence is not the same as causal
interference}\label{statistical-interdependence-is-not-the-same-as-causal-interference}

Two distinct forms of dependence are often present in dyadic data, and
treating them as interchangeable creates conceptual ambiguity.
Statistical interdependence means that outcomes or residuals from two
members of a pair are correlated. APIM, multilevel models, generalized
estimating equations, and structural equation models are designed to
represent such dependence so that regression coefficients and standard
errors are appropriate for paired observations. Causal interference is a
different statement: it says that one member\textquotesingle s exposure
could change the other member\textquotesingle s potential outcome. A
dataset can display strong residual correlation without meaningful
treatment interference, and it can display causal interference even
after residual correlation has been modeled correctly. The first problem
is primarily about the sampling and covariance structure; the second
changes the definition of the potential outcomes and therefore the
estimand itself.

This distinction matters because adding a random intercept, correlated
residual, or actor-partner covariance structure does not by itself
convert a partner regression coefficient into a spillover effect. Those
devices can model dependence conditional on the observed covariates, but
they do not specify what would happen under a joint intervention,
whether both members had positive probability of receiving each relevant
treatment state, or whether common causes of the two treatment
assignments and two outcomes were adequately controlled. Conversely,
potential-outcome notation alone does not solve the statistical
dependence problem: inference still needs to treat the dyad, rather than
the individual, as the independent sampling unit. A coherent causal
dyadic analysis therefore needs both pieces at once: a counterfactual
treatment definition and an analysis that respects within-dyad
dependence.

The four-state joint treatment representation is a particularly useful
bridge. It makes clear that an actor contrast, a partner contrast, and a
joint contrast are not three unrelated coefficients but contrasts among
the same set of regime-specific potential-outcome means. This
perspective also reveals why apparently adequate marginal treatment
frequencies can conceal poor causal support. If 20\% of individuals
retire during an interval, for example, it does not follow that a
sufficient number of couples experience the 11 state in which both
members retire. The support problem is joint rather than marginal.

\subsection{Three recurring routes from APIM language to causal
overinterpretation}\label{three-recurring-routes-from-apim-language-to-causal-overinterpretation}

The first route is terminological. Because APIM uses the words actor
effect and partner effect, substantive reports can slide from
\textquotesingle association conditional on the specified
model\textquotesingle{} to \textquotesingle causal
effect\textquotesingle{} without an explicit change in assumptions. The
same linguistic problem appears elsewhere in regression analysis, but it
is especially salient for partner effects because the causal analogue is
naturally described as a spillover or interference effect. The causal
question is not whether a partner variable has a nonzero coefficient; it
is whether the focal member\textquotesingle s outcome would differ under
two well-defined partner exposure assignments while the relevant own
exposure component is held fixed.

The second route is treatment-model factorization. When the causal
treatment is dyadic, the relevant generalized propensity score has four
categories. Estimating two individual binary propensities and
multiplying them is attractive because it reuses familiar methods, but
it embeds a substantive conditional-independence restriction. Couples
may coordinate retirement, treatment uptake, caregiving, diet, or health
behavior after conditioning on measured baseline history. In those
settings the product of two individually correct marginal models is not
necessarily the correct probability of the observed joint state. The
simulation study stress-tests this practical strategy as residual
treatment dependence increases; because the fitted separate-propensity
arm also uses parametric marginal working models, its bias is not
interpreted as a pure numerical decomposition of factorization error.

The third route is longitudinal overadjustment. Researchers studying
couples over repeated waves often adjust for the most recent
measurements of depression, health, caregiving, employment, or
relationship quality. Such adjustment is useful for prediction and may
be appropriate for certain controlled direct effects, but it can be
incompatible with the total effect of a sustained exposure regime when
the adjusted covariate is affected by earlier exposure. Robins et al.
(2000) showed why treatment-confounder feedback requires g-methods in
longitudinal causal inference. In a dyad the same problem occurs with a
larger history: each member\textquotesingle s treatment and intermediate
variables can enter the other\textquotesingle s subsequent assignment
and outcome process.

\subsection{Scope of the
contribution}\label{scope-of-the-contribution}

The framework developed here is intentionally modular. It does not
replace APIM and does not propose a new general theory of interference.
Its contribution is a translation-and-diagnostic methodology for
researchers whose scientific questions are phrased in actor-partner
language but whose intended conclusions are causal. The framework
identifies the regime means corresponding to those questions, gives an
exact special-case bridge from APIM coefficients to causal contrasts,
makes the consequences of treatment-law factorization algebraically
explicit, and embeds estimation in a target-trial protocol. The
simulations are therefore not a method tournament: they verify the
formal identities under controlled conditions, show what flexible
nuisance learning can and cannot repair, and demonstrate how weak joint
support limits all estimators.

\begin{figure}[H]
\centering
\includegraphics[width=0.88\textwidth]{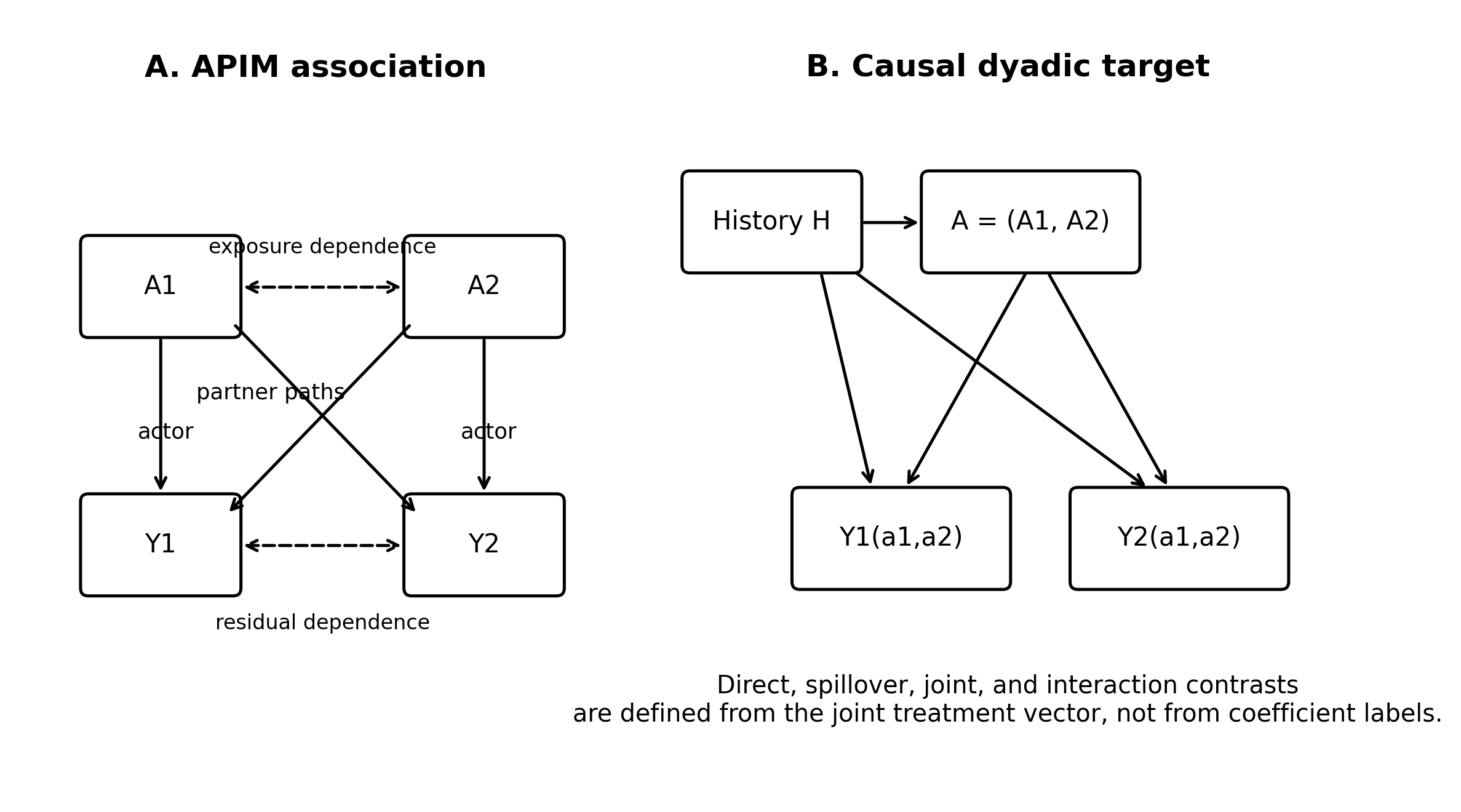}
\caption{From APIM association to a causal dyadic treatment target. APIM paths describe conditional associations. The causal formulation instead indexes each member-specific potential outcome by the joint treatment vector and defines contrasts across joint treatment states.}
\end{figure}

\section{Dyadic Causal Estimands and Their Relation to
APIM}\label{dyadic-causal-estimands-and-their-relation-to-apim}

\subsection{APIM as an associational
model}\label{apim-as-an-associational-model}

For dyad $d$ with members $j=1,2$, let $A_1$ and $A_2$ denote member-specific
binary exposures and $Y_1$ and $Y_2$ the corresponding outcomes measured after
exposure classification. A simple adjusted APIM-style conditional mean
can be written as
\begin{equation}
\begin{aligned}
\E(Y_1\mid A_1,A_2,H)&=\alpha_1(H)+\beta_A A_1+\beta_P A_2+\beta_I A_1A_2,\\
\E(Y_2\mid A_1,A_2,H)&=\alpha_2(H)+\beta_P A_1+\beta_A A_2+\beta_I A_1A_2,
\end{aligned}
\tag{1}
\end{equation}
where $H$ denotes measured histories of both members and shared dyadic
covariates. More general APIMs may allow distinguishable members,
member-specific coefficients, random effects, correlated residuals,
latent variables, or repeated measures. Those additions model dependence
and heterogeneity, but they do not by themselves turn $\beta_A$ or $\beta_P$ into
intervention effects.

\subsection{Potential outcomes under within-dyad
interference}\label{potential-outcomes-under-within-dyad-interference}

Let $A=(A_1,A_2)$ take one of four states $\mathcal S=\{00,10,01,11\}$. For member $j$,
define $Y_j(a_1,a_2)$ as the potential outcome that would be observed under
joint treatment state $(a_1,a_2)$. No between-dyad interference means that
$Y_j$ depends on the treatment of the two members of its dyad but not on
treatment vectors assigned to other dyads. This is a two-person special
case of partial interference.

For member 1, four elementary contrasts are:

\[
D_1=\E\{Y_1(1,0)-Y_1(0,0)\},
\]

\[
S_1=\E\{Y_1(0,1)-Y_1(0,0)\},
\]

\[
J_1=\E\{Y_1(1,1)-Y_1(0,0)\},
\]

\begin{equation}
I_1=\E\{Y_1(1,1)-Y_1(1,0)-Y_1(0,1)+Y_1(0,0)\}.
\tag{2}
\end{equation}

Analogous contrasts are defined for member 2. In indistinguishable or
symmetrically reported dyads, actor and partner effects can be averaged
across both orientations. For example, the symmetric actor effect is one
half of the member-1 direct contrast and the corresponding member-2
contrast after switching the role labels. This prevents arbitrary member
ordering from becoming a scientific estimand.

\textbf{Table 1. Causal dyadic estimands and APIM analogues}

\begin{longtable}[]{@{}
  >{\raggedright\arraybackslash}p{(\columnwidth - 6\tabcolsep) * \real{0.2500}}
  >{\raggedright\arraybackslash}p{(\columnwidth - 6\tabcolsep) * \real{0.2500}}
  >{\raggedright\arraybackslash}p{(\columnwidth - 6\tabcolsep) * \real{0.2500}}
  >{\raggedright\arraybackslash}p{(\columnwidth - 6\tabcolsep) * \real{0.2500}}@{}}
\toprule\noalign{}
\begin{minipage}[b]{\linewidth}\raggedright
\textbf{Estimand}
\end{minipage} & \begin{minipage}[b]{\linewidth}\raggedright
\textbf{Potential-outcome definition (member 1)}
\end{minipage} & \begin{minipage}[b]{\linewidth}\raggedright
\textbf{Causal interpretation}
\end{minipage} & \begin{minipage}[b]{\linewidth}\raggedright
\textbf{APIM analogue}
\end{minipage} \\
\midrule\noalign{}
\endhead
\bottomrule\noalign{}
\endlastfoot
Actor/direct & E{[}Y1(1,0)-Y1(0,0){]} & Own exposure changes; partner
fixed at 0 & Actor coefficient under restrictive conditions \\
Partner/spillover & E{[}Y1(0,1)-Y1(0,0){]} & Partner exposure changes;
own exposure fixed at 0 & Partner coefficient under restrictive
conditions \\
Joint & E{[}Y1(1,1)-Y1(0,0){]} & Move dyad from neither exposed to both
exposed & Sum of actor, partner, and interaction in an additive
conditional-mean model \\
Interaction & E{[}Y1(1,1)-Y1(1,0)-Y1(0,1)+Y1(0,0){]} & Departure from
additivity of the two exposure components & Actor-by-partner product
term under restrictive conditions \\
\end{longtable}

\emph{Note.} Member-2 contrasts are obtained by switching member labels.
The empirical HRS analysis reports symmetric averages over both member
orientations.

\subsection{When can an APIM coefficient be
causal?}\label{when-can-an-apim-coefficient-be-causal}

A useful clarification is that APIM coefficients are not inherently
noncausal. A causal interpretation is available in a restrictive but
transparent special case.

\begin{proposition}[APIM coefficient equivalence under homogeneous treatment effects]\label{prop:apim-equivalence}
Suppose consistency holds, there is no interference between dyads, joint conditional exchangeability holds given $H$, the required joint treatment states satisfy positivity, and the member-1 conditional mean satisfies
\[
m_1(a_1,a_2,H)=\alpha_1(H)+\beta_Aa_1+\beta_Pa_2+\beta_Ia_1a_2,
\]
where $\beta_A$, $\beta_P$, and $\beta_I$ do not vary with $H$. Then the standardized member-1 direct, spillover, interaction, and $11$-versus-$00$ joint causal contrasts equal the corresponding treatment coefficients and their sum:
\[
D_1=\beta_A,\qquad S_1=\beta_P,\qquad I_1=\beta_I,\qquad J_1=\beta_A+\beta_P+\beta_I.
\]
\end{proposition}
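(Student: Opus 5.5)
The plan is to identify each regime mean by the g-formula and then substitute the assumed additive form of $m_1$. Here $m_1(a_1,a_2,H)$ is read as the observed-data regression $\E(Y_1\mid A_1=a_1,A_2=a_2,H)$.

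\textbf{Step 1: identification.} Fix any state $(a_1,a_2)\in\mathcal S$ that appears in one of the four contrasts. These states are $00$, $10$, $01$ and $11$, so all four need positivity. Iterated expectations give $\E\{Y_1(a_1,a_2)\}=\E[\E\{Y_1(a_1,a_2)\mid H\}]$. Joint conditional exchangeability, $Y_1(a_1,a_2)\perp (A_1,A_2)\mid H$, lets us condition on the event $A=(a_1,a_2)$. This step is legitimate because positivity, $P(A=(a_1,a_2)\mid H)>0$ almost surely, makes the conditional expectation well defined on the support of $H$. Consistency, together with no between-dyad interference, then replaces $Y_1(a_1,a_2)$ by the observed $Y_1$ on that event. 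The result is
\[
\E\{Y_1(a_1,a_2)\}=\E\{m_1(a_1,a_2,H)\}.
\]
No between-dyad interference is what makes $Y_1(a_1,a_2)$ well defined as a function of the dyad's own treatment vector alone.

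\textbf{Step 2: substitution.} Insert the additive form to get $\E\{Y_1(a_1,a_2)\}=\E\{\alpha_1(H)\}+\beta_Aa_1+\beta_Pa_2+\beta_Ia_1a_2$. The coefficients do not depend on $H$, so they come out of the outer expectation. Write $\mu_{a_1a_2}$ for this mean. The common term $\E\{\alpha_1(H)\}$ is assumed finite, for instance because the regime means are finite. It is the same for all four states, so it cancels in every contrast. Taking differences gives:
\begin{itemize}
\item $D_1=\mu_{10}-\mu_{00}=\beta_A$;
\item $S_1=\mu_{01}-\mu_{00}=\beta_P$;
\item $J_1=\mu_{11}-\mu_{00}=\beta_A+\beta_P+\beta_I$;
\item $I_1=\mu_{11}-\mu_{10}-\mu_{01}+\mu_{00}=\beta_I$.
\end{itemize}

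\textbf{Where the care is needed.} The differencing is routine. The only delicate point is Step 1, and there are two things to get right. First, exchangeability must hold for the joint vector $(A_1,A_2)$, not separately for each component. Second, positivity is needed for every state in $\mathcal S$ used by the contrast. Otherwise $m_1(a_1,a_2,H)$ is defined only on part of the support of $H$, and the additive form cannot be evaluated there.

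I will also note why homogeneity matters. If the coefficients were allowed to vary, say $\beta_A(H)$, the same argument would give $D_1=\E\{\beta_A(H)\}$, and similarly for the others. So the coefficient-equals-contrast statement holds exactly because the coefficients are constant in $H$. Finally, because the model is saturated in the four treatment states, the additive form places no extra restriction on the treatment part of the mean. The substantive restriction is only the lack of effect modification by $H$. The member-2 statements follow by switching the labels.
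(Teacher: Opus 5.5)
Your proposal is correct and takes the same route as the paper's proof. Each regime mean is identified by the g-formula via iterated expectation, joint exchangeability, positivity and consistency, and substituting the homogeneous additive form then cancels $\alpha_1(H)$ while the constant coefficients pass through the outer expectation; the only cosmetic difference is that you take $\E\{\alpha_1(H)\}$ before differencing, hence your finiteness remark, whereas the paper differences inside the expectation so that $\alpha_1(H)$ cancels pointwise.
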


\begin{corollary}[Standardization under effect modification]\label{cor:standardization}
If treatment effects vary with $H$, a single APIM coefficient need not equal the marginal causal contrast. The causal quantity is obtained by evaluating $m_j(a,H)$ under each required joint state and averaging over the target $H$ distribution before taking the contrast. Thus $\alpha_j(H)$ may be arbitrarily complex in Proposition~\ref{prop:apim-equivalence}; the critical restriction is the homogeneous, correctly specified treatment-related part of the conditional mean. Omitting the $A_1A_2$ interaction additionally requires additivity if actor and partner coefficients are to reconstruct the joint contrast.
\end{corollary}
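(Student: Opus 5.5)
The statement to prove is the Corollary, which rests on Proposition~\ref{prop:apim-equivalence}. The plan is to go through the identification formula for regime means and then read off each claim. For each joint state $a=(a_1,a_2)\in\mathcal S$, I will first establish the g-formula
\[
\E\{Y_j(a)\}=\E_H\{m_j(a,H)\},\qquad m_j(a,H)=\E(Y_j\mid A=a,H).
\]
The chain of steps is: iterated expectations over $H$, then joint conditional exchangeability to write $\E\{Y_j(a)\mid H\}=\E\{Y_j(a)\mid A=a,H\}$, then consistency to replace $Y_j(a)$ by $Y_j$ on $\{A=a\}$. Positivity of $P(A=a\mid H)$ for each required state ensures the conditioning event has positive probability. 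No between-dyad interference is what makes $Y_j(a)$ well defined as a function of the dyad's own four-state treatment.

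Every causal contrast in (2) is a fixed linear combination of these regime means. Linearity of expectation therefore gives, for example,
\[
D_1=\E_H\{m_1(1,0,H)-m_1(0,0,H)\},
\]
and similarly for $S_1$, $I_1$, and $J_1$. This yields the standardization recipe in the Corollary: evaluate $m_j(a,H)$ at each required state, average over the target $H$ distribution, and then contrast. For the claim that a single coefficient need not equal the marginal contrast, I will write
\[
m_1=\alpha_1(H)+\beta_A(H)a_1+\beta_P(H)a_2+\beta_I(H)a_1a_2.
\]
The recipe then gives $D_1=\E\{\beta_A(H)\}$. This is a population average of the conditional effects and generally differs from a fitted constant coefficient. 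A short counterexample makes the point: if $\beta_A(H)$ varies with $H$ while the APIM forces a common slope, the least-squares coefficient is a variance-weighted average of $\beta_A(H)$ (weights depending on $\mathrm{Var}(A_1\mid H)$), not the $H$-marginal mean. So equality can fail.

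For the claim that $\alpha_j(H)$ may be arbitrary, the key observation is that each contrast in (2) has coefficients on the four states summing to zero. The term $\alpha_1(H)$ therefore cancels pointwise in $H$ before averaging, and no restriction on its form is needed. Under homogeneity the remaining pieces are constants, which recovers $D_1=\beta_A$, $S_1=\beta_P$, $I_1=\beta_I$, and $J_1=\beta_A+\beta_P+\beta_I$, as in Proposition~\ref{prop:apim-equivalence}. The critical restriction is therefore exactly that the treatment-related part is correctly specified and constant in $H$.

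For the final claim about omitting the interaction, I will fit the model with $\beta_I$ constrained to $0$. If the true mean is the one in Proposition~\ref{prop:apim-equivalence}, then $J_1=\beta_A+\beta_P+\beta_I$. The sum of the actor and partner coefficients reconstructs $J_1$ only when $\beta_I=0$, which is additivity. Also $I_1=\beta_I$, so additivity is equivalent to $I_1=0$. Under non-additivity the misspecified coefficients do not even equal $D_1$ and $S_1$ in general.

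The main obstacle is stating the effect-modification part precisely. "Need not equal" is an existence claim, so I will supply the explicit counterexample with a weighted-average discrepancy rather than argue in general. The remaining steps are routine linearity once the g-formula step is written out carefully with the positivity caveat.
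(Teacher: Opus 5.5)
Your proposal is correct and follows the paper's appendix argument essentially step for step. The shared steps are g-formula identification of each regime mean; pointwise cancellation of $\alpha_1(H)$, for which your observation that each contrast's state coefficients sum to zero is exactly the reason; $D_1=\E\{\beta_A(H)\}$ and its analogues under effect modification; and $J_1=D_1+S_1+I_1$ for the additivity claim.

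One caution concerns the counterexample you flag as the main obstacle. Weights proportional to $\mathrm{Var}(A_1\mid H)$ are the single-binary-treatment, saturated-control result. With $A_2$ and $A_1A_2$ in the regression, the coefficient on $A_1$ generally also absorbs heterogeneity in $\beta_P(H)$ and $\beta_I(H)$. So construct the counterexample in a special case where that formula does hold, for instance $A_2$ independent of $(A_1,H)$, with $\beta_P,\beta_I$ constant and $\alpha_1(H)$ saturated.
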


Proofs are given in Appendix~\ref{sec:proofs}.

\subsection{Distinguishable, indistinguishable, and symmetrically
reported
dyads}\label{distinguishable-indistinguishable-and-symmetrically-reported-dyads}

Dyadic analyses differ in whether member labels carry substantive
meaning. In distinguishable dyads, member roles are defined
independently of the observed exposure and outcome values. Examples
include patient-caregiver dyads, parent-child pairs, or couples in which
a role such as birth parent versus partner is scientifically relevant.
In that setting the member-1 direct effect and member-2 direct effect
are separate causal quantities and need not be equal. Likewise, the
effect of member 2\textquotesingle s exposure on member 1 can differ
from the reverse spillover effect. A causal analysis should preserve
that asymmetry rather than imposing an actor-partner equality for
convenience.

In indistinguishable dyads, or when the scientific report does not
privilege either orientation, a symmetric estimand is often preferable.
Let D1 denote member 1\textquotesingle s direct contrast and D2 the
corresponding contrast after swapping labels. The symmetric actor effect
is (D1+D2)/2, with analogous averages for spillover, joint, and
interaction contrasts. This final average is invariant to which spouse
is called member 1 even when the administrative labeling rule is
correlated with age, gender, or other characteristics. The
orientation-specific components, however, remain label-specific and
should not be interpreted as estimates for demographically exchangeable
roles unless that is substantively justified.

Symmetric reporting is not the same as assuming identical structural
equations for the two members. One can fit distinct member-specific
outcome regressions, generate two sets of regime means, and average only
at the final estimand stage. This separation is useful in
machine-learning implementations because it allows the outcome surface
for one member to differ from the other\textquotesingle s while
preserving an orientation-invariant report. If substantive
distinguishability is important, the averaging step should simply be
omitted and the member-specific contrasts reported separately.

\subsection{Conditional coefficients, marginal causal effects, and
effect
heterogeneity}\label{conditional-coefficients-marginal-causal-effects-and-effect-heterogeneity}

A second distinction concerns conditional versus marginal targets. APIM
coefficients are usually conditional on the variables included in the
model. The potential-outcome contrasts in Equation (2) are marginal
averages over a target population. These coincide under the homogeneous
linear conditional-mean structure stated in Section 2.3, but they need
not coincide when treatment effects vary with age, baseline health, sex,
prior outcome, or shared household history. In the presence of effect
modification, a coefficient from one regression parameterization is not
generally the same object as a standardized population-average causal
contrast.

The g-formula representation in Equation (3) resolves the distinction
explicitly. For every joint state a, the analyst first estimates the
conditional mean $m_j(a,H)$ and then averages predictions over the
empirical distribution of H in the target population. The contrast is
therefore standardized to a named population. This operation is routine
in modern causal inference but is easy to overlook in dyadic
applications, where interest often centers on path coefficients. The
target population should be stated because different eligibility rules,
couple types, or baseline waves can produce different marginal effects
even when the conditional response surface is unchanged.

This point also clarifies what cross-fitting contributes. Cross-fitting
is not itself a causal assumption and does not define the estimand. It
is an estimation device that permits more flexible nuisance models while
reducing overfitting bias in orthogonal estimating equations. The causal
target continues to be determined by the joint treatment regimes, target
population, time ordering, and identification assumptions. Keeping those
layers separate prevents machine-learning performance from being
mistaken for identification.

\section{A Dyadic Target-Trial
Formulation}\label{a-dyadic-target-trial-formulation}

\subsection{Protocol components}\label{protocol-components}

The target-trial formulation treats the couple as the unit of treatment
assignment while retaining member-specific outcomes. This distinction
does not require that both members receive the same treatment. It
requires that the intervention be represented by the full joint state
$A=(A_1,A_2)$, because each component can affect either outcome. Table 2
shows a generic protocol and the specification used in the HRS
illustration.

\textbf{Table 2. Dyadic target-trial protocol and HRS emulation}

\begin{longtable}[]{@{}
  >{\raggedright\arraybackslash}p{(\columnwidth - 4\tabcolsep) * \real{0.3333}}
  >{\raggedright\arraybackslash}p{(\columnwidth - 4\tabcolsep) * \real{0.3333}}
  >{\raggedright\arraybackslash}p{(\columnwidth - 4\tabcolsep) * \real{0.3333}}@{}}
\toprule\noalign{}
\begin{minipage}[b]{\linewidth}\raggedright
\textbf{Component}
\end{minipage} & \begin{minipage}[b]{\linewidth}\raggedright
\textbf{Hypothetical dyadic target trial}
\end{minipage} & \begin{minipage}[b]{\linewidth}\raggedright
\textbf{HRS emulation}
\end{minipage} \\
\midrule\noalign{}
\endhead
\bottomrule\noalign{}
\endlastfoot
Eligibility & Both members satisfy prespecified criteria using
pre-exposure information only. & At first eligible baseline from wave 4
onward, both members were age 50-69, working full- or part-time, linked
as a reciprocal couple, and had required baseline covariates and later
CES-D observed. \\
Treatment strategies & Assign joint state 00, 10, 01, or 11 over a
prespecified exposure interval. & At t+1, each member was classified as
still working full/part-time (0) or fully retired (1); other labor-force
states were excluded from this contrast. \\
Time zero & Eligibility and assignment are aligned. & Baseline t defines
eligibility/history; retirement state is classified at the next biennial
wave. \\
Outcome & Member-specific outcome under each joint treatment state. &
RAND CES-D for each member at t+2. \\
Causal contrasts & Direct, spillover, joint, and interaction contrasts.
& Symmetric averages over the two member orientations. \\
Analysis & Estimator compatible with joint assignment and confounding
structure. & Joint multinomial IPW, parametric AIPW, and 5-fold
cross-fitted random-forest AIPW; separate-PS IPW and APIM-style outcome
regression are comparators. \\
\end{longtable}

\subsection{Emulation-gap audit}\label{emulation-gap-audit}

A target-trial table is most useful when it records not only the
intended protocol but also where the observational emulation departs
from it. Table 2A makes those departures explicit for the HRS
illustration. The audit is descriptive rather than a bias-correction
procedure: its purpose is to prevent post-baseline selection,
exposure-window confounding, treatment-version ambiguity, and weak
support from being hidden inside an otherwise well-aligned regression
analysis. This design-first emphasis is complementary to the DYADIC
meta-framework, which classifies how an intervention is dyadic in who
participates, what is done, the mechanisms targeted, and the outcomes
considered (Berli et al., 2026); the present framework asks what joint
treatment vector and counterfactual contrast those intervention features
imply.

Table 2A. HRS target-trial emulation-gap audit

\begin{longtable}[]{@{}
  >{\raggedright\arraybackslash}p{(\columnwidth - 4\tabcolsep) * \real{0.3333}}
  >{\raggedright\arraybackslash}p{(\columnwidth - 4\tabcolsep) * \real{0.3333}}
  >{\raggedright\arraybackslash}p{(\columnwidth - 4\tabcolsep) * \real{0.3333}}@{}}
\toprule\noalign{}
\begin{minipage}[b]{\linewidth}\raggedright
\textbf{Design element}
\end{minipage} & \begin{minipage}[b]{\linewidth}\raggedright
\textbf{HRS operationalization}
\end{minipage} & \begin{minipage}[b]{\linewidth}\raggedright
\textbf{Emulation gap / consequence}
\end{minipage} \\
\midrule\noalign{}
\endhead
\bottomrule\noalign{}
\endlastfoot
\textbf{Eligibility / selection} & Baseline working couple, but the
binary contrast additionally requires an allowed labor-force state at
t+1 and observed CES-D at t+2. & Conditions on post-baseline states and
outcome observation; interpretation is for the selected analytic cohort
unless selection is otherwise addressed. \\
\textbf{Exposure interval} & Retirement is classified at the next
biennial wave rather than assigned at baseline. & Health or employment
shocks during the interval can affect both retirement and later CES-D;
baseline-only adjustment cannot remove such exposure-window
confounding. \\
\textbf{Treatment versions} & ``Fully retired'' is treated as a common
treatment label. & Retirement can arise through heterogeneous health,
financial, caregiving, or employer pathways; consistency requires
sufficient version comparability for the claimed contrast. \\
\textbf{Censoring / competing events} & Complete observed t+2 outcome is
required; death and attrition are not weighted. & Informative
nonobservation can change the target population and induce selection. \\
\textbf{Joint support} & Four retirement states are required; only 40
couples are observed in 11. & Finite-sample joint positivity is weak for
11-related joint and interaction contrasts, regardless of adequate
marginal retirement prevalence. \\
\end{longtable}

Note. The audit distinguishes the hypothetical target trial from the
operational choices used in the HRS illustration. These gaps do not by
themselves determine the direction or magnitude of bias; they identify
where additional assumptions or sensitivity analyses are needed.

\subsection{Identification}\label{identification}

Let H contain the measured pre-exposure common causes required for
identification. For each member j and joint state a, identification by
the dyadic g-formula follows under consistency, joint conditional
exchangeability, joint positivity, and no between-dyad interference.
Formally, joint exchangeability requires $Y_j(a)\perp A\mid H$ for every $a\in\mathcal S$. Identification requires
$P(A=a\mid H=h)>0$ wherever $h$ has positive density in the
target population. Stable finite-sample estimation requires the stronger
working condition that relevant joint probabilities be bounded away from
zero on the target support. The latter is not an identification theorem;
it is the practical overlap condition examined in the simulations and
HRS diagnostics.

\begin{equation}
\mu_j(a)=\E\{Y_j(a)\}=\E_H\bigl[\E\{Y_j\mid A=a,H\}\bigr].
\tag{3}
\end{equation}

The use of the joint treatment vector matters. Exchangeability is a
statement about the observed joint state, not separately about each
member in isolation. Likewise, positivity is evaluated for each joint
state. A dataset can have adequate marginal numbers of retired and
nonretired individuals while providing very little support for the
both-treated state 11 in some covariate regions.

\subsection{Longitudinal treatment-confounder
feedback}\label{longitudinal-treatment-confounder-feedback}

With repeated treatment decisions, $H_t$ may contain variables that are
causes of treatment at time t and outcome but are themselves affected by
treatment at time t-1. Conditioning on such a variable in a conventional
outcome regression changes the causal pathway being estimated and can
induce bias for sustained-regime effects. The target trial therefore
distinguishes baseline histories from post-baseline time-varying
confounders and, when repeated treatment strategies are the target, uses
sequential g-method logic rather than automatic covariate adjustment.

For two decision times, the joint treatment mechanism factorizes
sequentially as $g_1(\mathbf A_1\mid H_1)\,g_2(\mathbf A_2\mid H_2,\mathbf A_1)$, where each
$\mathbf A_t$ is itself a dyadic vector. This structure is evaluated in the second
simulation study. The HRS illustration is intentionally simpler: it
defines one joint retirement transition at t+1 and measures CES-D at
t+2, using baseline histories only for confounding adjustment.

\subsection{Time zero, eligibility, and treatment
assignment}\label{time-zero-eligibility-and-treatment-assignment}

Target-trial thinking is most useful when it forces temporal decisions
that a regression specification can leave implicit. Eligibility must be
evaluated using information available before treatment assignment, and
the start of follow-up must be aligned with the treatment strategies
being compared. Hernan and Robins (2016) emphasized that misalignment
can create avoidable design biases. Recent target-trial guidance
likewise treats explicit protocol specification as a way to make the
causal question auditable rather than as a new statistical estimator
(Hernan et al., 2022; Cashin et al., 2025; Simon-Tillaux et al., 2026).

The dyadic setting adds an additional alignment requirement: both
members must belong to the same risk set at the same dyadic time zero.
If one spouse is eligible because of information measured after the
other spouse\textquotesingle s exposure classification, the resulting
pair is not emulating a single coherent assignment event. Likewise, if
eligibility depends on a future property of the relationship, such as
remaining married after treatment, the analysis may condition on a
post-baseline consequence. In the HRS application we therefore define
reciprocal couple membership at baseline and deliberately do not require
the pair to remain together through the outcome wave in the primary
cohort.

Our empirical exposure is classified over the interval from baseline t
to t+1, and CES-D is measured at t+2. This is a pragmatic emulation
rather than a literal randomized retirement trial. The intervention
\textquotesingle fully retired by the next wave\textquotesingle{}
contains versions that cannot be completely standardized in
observational data, including voluntary versus involuntary retirement
and heterogeneous changes in work hours, income, or health insurance.
The consistency assumption therefore requires a treatment-version
interpretation analogous to treatment-variation irrelevance: versions
grouped under the same retirement label must be sufficiently comparable
for the target contrast being claimed (VanderWeele \& Hernán, 2013). The
target-trial formulation makes this limitation visible; it does not
eliminate it.

\subsection{Censoring, death, and dyad
dissolution}\label{censoring-death-and-dyad-dissolution}

Longitudinal dyadic studies face several forms of censoring. A member
can die, miss an interview, enter a nursing facility, or become
otherwise unavailable. The relationship can also dissolve or the partner
can change. These events are scientifically different. Death may
preclude measurement of a later outcome, whereas partner change can
alter the meaning of a partner exposure without necessarily preventing
outcome observation. A rigorous analysis must specify whether such
events define censoring, competing events, treatment versions, or
changes in the target population.

The primary HRS illustration uses a complete observed-outcome cohort and
does not estimate inverse-probability-of-censoring weights. This choice
keeps the application focused on the joint-treatment problem but narrows
the interpretation to dyads with the required observed variables. We
retain post-baseline partner identifiers only as diagnostics.
Ninety-eight percent of the final baseline-defined couples remained with
the same partner at t+1 and approximately 96\% at t+2, so the primary
analysis is not driven by widespread partner turnover. Appendix E
compares the primary estimates with a stable-partner restriction as a
sensitivity analysis.

\section{Estimation}\label{estimation}

\subsection{Why the joint propensity is the default dyadic
object}\label{why-the-joint-propensity-is-the-default-dyadic-object}

For treatment state $a$, define the joint propensity $g_a(H)=P(A=a\mid H)$. A common shortcut is to estimate separate member-specific
propensities and multiply them. For example, $g_{11}(H)$ may be approximated
as $P(A_1=1\mid H)\,P(A_2=1\mid H)$. This factorization is correct
only if $A_1$ and $A_2$ are conditionally independent given $H$. In couples,
residual dependence can remain because of shared plans, common shocks,
negotiation, or omitted dyad-level determinants. The primary simulation
stress-tests this fitted factorized strategy. Because its
separate-propensity arm also estimates parametric marginal models,
however, the resulting bias cannot be attributed to factorization alone;
Section 5.1 states the exact working models and the corresponding
limitation.

\subsection{Joint IPW and AIPW}\label{joint-ipw-and-aipw}

For independent dyads $d=1,\ldots,n$, a joint IPW estimator of the mean
potential outcome for member $j$ under state $a$ is

\begin{equation}
\widehat\mu_{j,\mathrm{IPW}}(a)=\frac1n\sum_{d=1}^n\frac{\mathbf 1(A_d=a)Y_{dj}}{\widehat g_a(H_d)}.
\tag{4}
\end{equation}

Let $m_j(a,H)=\E(Y_j\mid A=a,H)$ denote an outcome regression. The
augmented estimator is
\begin{equation}
\widehat\mu_{j,\mathrm{AIPW}}(a)=\frac1n\sum_{d=1}^n\left[\widehat m_j(a,H_d)+\frac{\mathbf 1(A_d=a)\{Y_{dj}-\widehat m_j(a,H_d)\}}{\widehat g_a(H_d)}\right].
\tag{5}
\end{equation}

The desired direct, spillover, joint, and interaction effects are linear
contrasts of the four estimated regime means. Under standard regularity
conditions, AIPW has the familiar double-robustness property for a fixed
treatment-effect target: consistency can be retained when either the
joint propensity model or the outcome regression is correctly specified
(Bang \& Robins, 2005). Because the primary 36-scenario factorial varied
both nuisance specifications together, Section 5.4 reports a focused
four-cell diagnostic in which the propensity and outcome models are
separately correct or misspecified. Broader partial-interference theory
establishes doubly robust estimation beyond the dyadic special case (Liu
et al., 2019; Qu et al., 2026).

\subsection{Cross-fitting and flexible nuisance
estimation}\label{cross-fitting-and-flexible-nuisance-estimation}

Parametric multinomial and outcome models can fail when confounding is
nonlinear or involves interactions. We therefore consider cross-fitted
AIPW: dyads are partitioned into folds, nuisance models are fit outside
each held-out fold, and propensity and outcome predictions are generated
for the held-out dyads before the influence-function-style
pseudo-outcomes are combined. Cross-fitting reduces overfitting bias
when flexible machine-learning nuisance estimators are used
(Chernozhukov et al., 2018; Zivich \& Breskin, 2021). In the HRS
analysis, folds are stratified by the four treatment states and defined
at the dyad level; random forests estimate the four-category treatment
mechanism and member-specific outcome regressions.

\subsection{Positivity, inference, and
diagnostics}\label{positivity-inference-and-diagnostics}

Doubly robust estimation does not remove the positivity requirement.
Small joint propensities can generate extreme inverse weights and
unstable pseudo-outcomes even when nuisance models are otherwise
correct. We therefore report state counts, lower-tail observed-state
probabilities, fractions below 0.05, inverse-weight diagnostics, and
sensitivity to numerical probability floors. The 0.005 floor used in the
production code is treated as a numerical stabilization choice rather
than as innocuous preprocessing; Section 6 reports untruncated, 0.005,
0.010, and 0.020 sensitivity for the parametric joint estimators.

For the empirical IPW/AIPW estimators, standard errors are calculated
from the empirical standard deviation of dyad-level contrast
pseudo-outcomes divided by the square root of the number of dyads. For
plain joint IPW this fixed-propensity-style variance calculation is
conservative when the propensity is estimated; the primary
simulation\textquotesingle s 0.991 and 0.977 mean coverage in the two
linear conditions should therefore be read as over-coverage, not
superior inference. The APIM-style benchmark uses uncertainty from 2,000
nonparametric dyad bootstrap resamples so that the associational
comparator is not presented without inference.

\subsection{Why multiplying separate propensities can be wrong}\label{why-multiplying-separate-propensities-can-be-wrong}

Let $p_1(H)=P(A_1=1\mid H)$ and $p_2(H)=P(A_2=1\mid H)$, and let $q_a(H)$ denote the four-state distribution obtained by multiplying these true marginals:
\[
q_{11}=p_1p_2,\qquad q_{10}=p_1(1-p_2),\qquad q_{01}=(1-p_1)p_2,\qquad q_{00}=(1-p_1)(1-p_2).
\]
Define the conditional dependence discrepancy $\delta(H)=g_{11}(H)-p_1(H)p_2(H)$. For a binary dyad, $\delta(H)=\Cov(A_1,A_2\mid H)$, and fixed margins force every discrepancy between the true joint law $g$ and the factorized law $q$ into one direction.

\[
(g_{00},g_{10},g_{01},g_{11})=(q_{00}+\delta,\ q_{10}-\delta,\ q_{01}-\delta,\ q_{11}+\delta).
\]

\begin{proposition}[Exact factorization geometry and IPW bias]\label{prop:factorization}
Assume $q_a(H)>0$ for the states used by the target. With $s=(+1,-1,-1,+1)$ in cell order $(00,10,01,11)$,
\[
g_a(H)=q_a(H)+s_a\delta(H).
\]
Hence the conditional total-variation distance is
\[
\TV\{g(\cdot\mid H),q(\cdot\mid H)\}=2|\delta(H)|,
\qquad \TV(g,q)\equiv\tfrac12\lVert g-q\rVert_1.
\]
If $\mu_a=\E\{m_a(H)\}$ and a Horvitz--Thompson regime mean is instead formed with the factorized denominator $q_a$, its population bias is exactly
\[
\widetilde\mu_a-\mu_a=\E\!\left\{\frac{s_a\delta(H)m_a(H)}{q_a(H)}\right\}.
\]
For $C=\sum_a c_a\mu_a$,
\[
\widetilde C-C=\E\!\left[\delta(H)\sum_a\frac{c_as_am_a(H)}{q_a(H)}\right].
\]
\end{proposition}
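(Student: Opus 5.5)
The plan is a direct computation in the four-state representation, organized in three steps.

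\textbf{Step 1 (cell identities).} Fix $H$. Since $A_1A_2=\mathbf 1(A=11)$, we have $g_{11}=\E(A_1A_2\mid H)$, so $\delta=\E(A_1A_2\mid H)-\E(A_1\mid H)\E(A_2\mid H)=\Cov(A_1,A_2\mid H)$. The true joint law has the marginals $p_1,p_2$. Therefore
\[
g_{10}=p_1-g_{11}=p_1-p_1p_2-\delta=q_{10}-\delta,
\]
and symmetrically $g_{01}=q_{01}-\delta$. Finally,
\[
g_{00}=1-g_{10}-g_{01}-g_{11}=q_{00}+\delta,
\]
because the $q_a$ also sum to one. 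This gives $g_a=q_a+s_a\delta$ with $s=(+1,-1,-1,+1)$.

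\textbf{Step 2 (total variation).} It follows immediately that $\lVert g-q\rVert_1=\sum_a|s_a\delta|=4|\delta|$. With the stated normalization $\TV=\tfrac12\lVert\cdot\rVert_1$, this gives $2|\delta(H)|$. I would note explicitly that this relies on the paper's convention $\TV\equiv\tfrac12\lVert g-q\rVert_1$, so no further argument is needed.

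\textbf{Step 3 (IPW bias).} Define the population Horvitz--Thompson functional
\[
\widetilde\mu_a=\E\{\mathbf 1(A=a)Y/q_a(H)\},
\]
which is well defined because $q_a>0$ by assumption. Condition on $H$ and use the tower property:
\[
\E\{\mathbf 1(A=a)Y\mid H\}=g_a(H)\,m_a(H),
\]
where $m_a(H)=\E(Y\mid A=a,H)$. When $g_a(H)=0$ the left side vanishes, so the identity holds for any version of $m_a$. Hence
\[
\widetilde\mu_a=\E\{g_am_a/q_a\}.
\]
The regime mean is $\mu_a=\E\{m_a\}=\E\{q_am_a/q_a\}$. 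Subtracting and substituting Step 1 gives
\[
\widetilde\mu_a-\mu_a=\E\{(g_a-q_a)m_a/q_a\}=\E\{s_a\delta m_a/q_a\}.
\]
The contrast statement then follows by linearity of expectation: $\widetilde C-C=\sum_ac_a(\widetilde\mu_a-\mu_a)$, and pulling the common factor $\delta(H)$ out of the sum gives the displayed formula.

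\textbf{Main obstacle.} The only delicate point is the regularity needed for the expectations to exist and for $\mu_a$ to coincide with the causal regime mean of Equation (3).
\begin{itemize}
\item I would assume $\E|Y|<\infty$ and that $m_a/q_a$ is integrable for the states used.
\item I would read $\mu_a$ as the g-formula functional $\E\{m_a(H)\}$, which equals $\E\{Y_j(a)\}$ under the Section 3.3 identification conditions.
\item Where positivity of $g_a$ fails, the bias formula remains an exact algebraic identity, but $\mu_a$ then depends on an arbitrary extension of $m_a$.
\end{itemize}
I would flag the last point rather than resolve it, since the proposition only assumes $q_a>0$.
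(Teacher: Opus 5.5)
Your proposal is correct and follows essentially the same route as the paper's proof: the fixed-margin cell algebra giving $g-q=\delta(+1,-1,-1,+1)$, the four-cell $\ell_1$ count for total variation under the $\tfrac12\lVert\cdot\rVert_1$ convention, conditioning on $H$ to obtain $\widetilde\mu_a=\E\{g_am_a/q_a\}$, and linearity for contrasts. Your added regularity remarks (integrability and the $g_a=0$ case) are consistent with the paper's blanket assumption that all displayed expectations are finite.
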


The proposition separates two distinct questions. Marginal-model accuracy concerns whether $p_1(H)$ and $p_2(H)$ are estimated correctly. Factorization concerns whether their product is the joint law. Even exact knowledge of both marginals leaves bias whenever $\delta(H)\neq0$ and the outcome-weighted terms in Proposition~\ref{prop:factorization} do not cancel. The total-variation identity also gives $\delta$ a direct probability-scale interpretation. The algebra is elementary for a $2\times2$ table; its role here is diagnostic, making clear exactly what the additional conditional-independence assumption contributes to a causal weighting analysis. Proof is in Appendix G.2. The displayed bias identity is specific to the unnormalized Horvitz--Thompson functional used here; normalized or stabilized H\'ajek weighting introduces an additional denominator effect, so the formula should not be applied verbatim to those estimators.

\begin{corollary}[AIPW protection against factorization error]\label{cor:aipw-factorization}
Let $\bar g_a$ and $\bar m_a$ denote probability limits of working propensity and outcome regressions. The population AIPW functional for regime $a$ differs from $\mu_a$ by
\[
\E\!\left[\left(1-\frac{g_a}{\bar g_a}\right)(\bar m_a-m_a)\right].
\]
If the working propensity is the true-marginal factorization $q_a$, then
\[
\Psi_{a,\mathrm{AIPW}}-\mu_a=-\E\!\left[\frac{s_a\delta(H)\{\bar m_a(H)-m_a(H)\}}{q_a(H)}\right].
\]
\end{corollary}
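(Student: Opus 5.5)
The plan is to write the population AIPW functional for regime $a$ with working limits $(\bar g_a,\bar m_a)$ as
\[
\Psi_{a,\mathrm{AIPW}}=\E\!\left[\bar m_a(H)+\frac{\mathbf 1(A=a)\{Y_j-\bar m_a(H)\}}{\bar g_a(H)}\right],
\]
and then to reduce everything to conditional expectations given $H$. The first step applies iterated expectations to the weighted residual term. Conditioning on $H$, we have $\E\{\mathbf 1(A=a)Y_j\mid H\}=g_a(H)m_a(H)$, because under joint exchangeability and consistency $m_a(H)=\E(Y_j\mid A=a,H)$. We also have $\E\{\mathbf 1(A=a)\mid H\}=g_a(H)$. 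The residual term therefore becomes $\E[g_a(m_a-\bar m_a)/\bar g_a]$. Subtracting $\mu_a=\E\{m_a(H)\}$ gives
\[
\Psi_{a,\mathrm{AIPW}}-\mu_a=\E\!\left[(\bar m_a-m_a)+\frac{g_a}{\bar g_a}(m_a-\bar m_a)\right]=\E\!\left[\left(1-\frac{g_a}{\bar g_a}\right)(\bar m_a-m_a)\right],
\]
which is the first display of the corollary.

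The second step specializes to $\bar g_a=q_a$ and invokes Proposition~\ref{prop:factorization}, which gives $g_a=q_a+s_a\delta$. It follows that $1-g_a/q_a=-s_a\delta/q_a$. Substituting this yields
\[
\Psi_{a,\mathrm{AIPW}}-\mu_a=-\E\!\left[\frac{s_a\delta(H)\{\bar m_a(H)-m_a(H)\}}{q_a(H)}\right].
\]
As a remark, this is the product structure: the bias vanishes if either $\delta\equiv0$ or $\bar m_a=m_a$. Linearity then extends it to any contrast $\sum_a c_a\mu_a$.

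The only delicate point is integrability and well-definedness. Division by $\bar g_a=q_a$ requires $q_a(H)>0$, which is the positivity hypothesis of Proposition~\ref{prop:factorization}. I would also assume finite second moments of $Y_j$, $m_a$ and $\bar m_a$, together with boundedness of $g_a/q_a$ on the support, so that the iterated-expectation step and the splitting of expectations are justified. The ratio $g_a/q_a$ is bounded whenever $q_a$ is bounded away from zero. A further point is that $\bar m_a$ must be a fixed function of $H$, i.e.\ a probability limit and not a fitted object depending on the same observation. This is implicit in the phrase ``population functional'' and should be stated explicitly. I do not expect any genuine obstacle beyond these conditions.
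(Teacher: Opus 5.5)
Your proposal is correct and follows the paper's proof essentially step for step. You condition on $H$ to reduce the weighted residual term to $\E[(g_a/\bar g_a)(m_a-\bar m_a)]$, subtract $\mu_a=\E(m_a)$ to obtain the product form, and then set $\bar g_a=q_a$ and substitute $g_a=q_a+s_a\delta$ from Proposition~\ref{prop:factorization}. One small correction: $\E\{\mathbf 1(A=a)Y_j\mid H\}=g_a(H)m_a(H)$ holds by the definition of $m_a$ as the observed-data regression and needs no causal assumptions. Exchangeability and consistency are used only to identify $\mu_a=\E\{m_a(H)\}$ with $\E\{Y_j(a)\}$.
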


Thus treatment dependence alone does not invalidate an outcome-regression strategy, and factorization error alone does not bias AIPW when the outcome regression is correct. Bias requires the treatment-law discrepancy to interact with outcome-model error. Conversely, a correctly specified joint propensity protects against outcome misspecification. This identity is the population version of the double-robustness pattern examined in Section 5.4. In practice, any coherent probabilistic learner may estimate the four-state joint law; multinomial regression is only one option. Proof is in Appendix G.3.

\subsection{Contrast-level influence functions and
implementation}\label{contrast-level-influence-functions-and-implementation}

Equation (5) yields one pseudo-outcome $\phi_j(a)$ for each member $j$ and state $a$.
Because the actor, spillover, joint, and interaction effects are linear
contrasts of regime means, their influence-function-style contributions
are obtained by taking the same linear contrasts of the state-specific
pseudo-outcomes. For example, the member-1 direct contribution is
$\phi_1(10)-\phi_1(00)$, and the interaction contribution is
$\phi_1(11)-\phi_1(10)-\phi_1(01)+\phi_1(00)$. The symmetric dyadic effect
averages the corresponding member-1 and member-2 contributions. Standard
errors can therefore be calculated from the empirical variance of the
dyad-level contrast contributions, treating dyads as the independent
sampling units.

The empirical implementation follows five steps. First, define the
analysis cohort and four-level joint state without using post-treatment
information. Second, partition dyads into cross-fitting folds,
stratifying by state when feasible so that rare treatment cells are
represented across folds. Third, fit the joint propensity and two
member-specific outcome regressions on the training folds. Fourth,
predict all four treatment-state probabilities and all counterfactual
outcome means for held-out dyads. Fifth, construct held-out AIPW
pseudo-outcomes and combine them only after every dyad has out-of-fold
nuisance predictions. This sequencing keeps the machine-learning step
separate from the final causal contrast. This implementation logic
follows the broader partial-interference software and workflow tradition
exemplified by Saul and Hudgens (2017), while using dyads as the
independent sampling units.

Cross-fitting does not guarantee good finite-sample behavior when the
treatment cell is extremely rare. AIPW contains the term $\mathbf 1(A=a)/g_a(H)$,
so an observation in a poorly supported state can still have a large
contribution. For that reason, nuisance-model diagnostics should be
paired with overlap diagnostics: treatment-state counts, minimum and
lower-tail observed-state probabilities, weight distributions, and, when
possible, plots of state-specific propensity distributions. A small
prediction error metric for the propensity model is not a substitute for
causal support.

\section{Simulation Studies}\label{simulation-studies}

\subsection{Primary simulation: joint treatment dependence and
nuisance
misspecification}\label{primary-simulation-joint-treatment-dependence-and-nuisance-misspecification}

The primary simulation generated independent dyads with two binary
treatments and continuous member-specific outcomes. Baseline covariates
U, X1, and X2 were standard normal. Joint treatment probabilities
followed a four-category softmax with η00=0, η10=λ1, η01=λ2, and
η11=λ1+λ2+κ. Thus κ is the conditional log odds ratio between A1 and A2
given the simulated history: κ=0 implies conditional independence,
whereas larger κ induces residual dyadic treatment dependence. Member-1
outcomes had true direct, spillover, and interaction coefficients 0.50,
0.30, and 0.25, yielding a joint contrast of 1.05; member 2 followed a
symmetric construction. Separate-PS IPW fit two main-effects Bernoulli
logistic models, A1 \textasciitilde{} U + X1 + X2 and A2
\textasciitilde{} U + X1 + X2, and multiplied their fitted
probabilities. Under the softmax law, for example, logit
P(A1=1\textbar H) = λ1 + log\{{[}1+exp(λ2+κ){]}/{[}1+exp(λ2){]}\}; when
κ is nonzero this marginal is generally not linear-logistic in H even
when λ1 and λ2 are. Accordingly, the fitted separate-PS arm can reflect
both factorization error and marginal working-model misspecification.

The full factorial design varied dyad sample size (500, 1,000, 3,000), κ
(0, 0.8, 1.6), confounding form (linear or nonlinear), and positivity
(good or near), producing 36 scenarios with 500 Monte Carlo replications
each. Estimators were APIM-style outcome regression, known-propensity
joint IPW, separate-propensity IPW, joint multinomial IPW, joint
parametric AIPW, and cross-fitted random-forest AIPW. For computational
tractability the full simulation used 2-fold cross-fitting and 300
trees, whereas the HRS illustration used 5 folds and 500 trees; the
estimand and estimating equation were unchanged. Estimated probabilities
were protected by a 0.005 numerical floor. One dataset-level numerical
failure occurred in the n=500, κ=0, linear/good scenario; all other
primary scenarios completed all 500 replications.

Table 3. Primary simulation performance averaged across sample sizes,
dependence levels, and four causal contrasts

\begin{longtable}[]{@{}
  >{\raggedright\arraybackslash}p{(\columnwidth - 8\tabcolsep) * \real{0.2000}}
  >{\raggedright\arraybackslash}p{(\columnwidth - 8\tabcolsep) * \real{0.2000}}
  >{\raggedright\arraybackslash}p{(\columnwidth - 8\tabcolsep) * \real{0.2000}}
  >{\raggedright\arraybackslash}p{(\columnwidth - 8\tabcolsep) * \real{0.2000}}
  >{\raggedright\arraybackslash}p{(\columnwidth - 8\tabcolsep) * \real{0.2000}}@{}}
\toprule\noalign{}
\begin{minipage}[b]{\linewidth}\raggedright
\textbf{Condition}
\end{minipage} & \begin{minipage}[b]{\linewidth}\raggedright
\textbf{Estimator}
\end{minipage} & \begin{minipage}[b]{\linewidth}\raggedright
\textbf{Mean \textbar bias|}
\end{minipage} & \begin{minipage}[b]{\linewidth}\raggedright
\textbf{Mean RMSE}
\end{minipage} & \begin{minipage}[b]{\linewidth}\raggedright
\textbf{Mean coverage}
\end{minipage} \\
\midrule\noalign{}
\endhead
\bottomrule\noalign{}
\endlastfoot
Linear / good & APIM-style OR & 0.004 & 0.108 & 0.945 \\
Linear / good & Separate-PS IPW & 0.286 & 0.402 & 0.746 \\
Linear / good & Joint multinomial IPW & 0.007 & 0.154 & 0.991 \\
Linear / good & Joint parametric AIPW & 0.006 & 0.126 & 0.943 \\
Linear / good & Cross-fitted RF-AIPW & 0.038 & 0.236 & 0.959 \\
Linear / good & Known-propensity joint IPW & 0.006 & 0.200 & 0.948 \\
Linear / near & APIM-style OR & 0.004 & 0.110 & 0.949 \\
Linear / near & Separate-PS IPW & 0.262 & 0.474 & 0.850 \\
Linear / near & Joint multinomial IPW & 0.009 & 0.270 & 0.977 \\
Linear / near & Joint parametric AIPW & 0.006 & 0.177 & 0.948 \\
Linear / near & Cross-fitted RF-AIPW & 0.022 & 0.282 & 0.948 \\
Linear / near & Known-propensity joint IPW & 0.017 & 0.356 & 0.944 \\
Nonlinear / good & APIM-style OR & 0.339 & 0.398 & 0.486 \\
Nonlinear / good & Separate-PS IPW & 0.475 & 0.563 & 0.479 \\
Nonlinear / good & Joint multinomial IPW & 0.362 & 0.480 & 0.553 \\
Nonlinear / good & Joint parametric AIPW & 0.377 & 0.481 & 0.514 \\
Nonlinear / good & Cross-fitted RF-AIPW & 0.048 & 0.307 & 0.924 \\
Nonlinear / good & Known-propensity joint IPW & 0.031 & 0.874 & 0.921 \\
Nonlinear / near & APIM-style OR & 0.495 & 0.542 & 0.427 \\
Nonlinear / near & Separate-PS IPW & 0.820 & 0.969 & 0.416 \\
Nonlinear / near & Joint multinomial IPW & 0.804 & 0.997 & 0.512 \\
Nonlinear / near & Joint parametric AIPW & 0.797 & 0.985 & 0.508 \\
Nonlinear / near & Cross-fitted RF-AIPW & 0.166 & 0.377 & 0.773 \\
Nonlinear / near & Known-propensity joint IPW & 0.095 & 2.602 & 0.879 \\
\end{longtable}

Note. Known-propensity joint IPW uses the known data-generating
treatment probabilities and is included only as a calibration benchmark.
APIM-style OR is correctly specified in the linear outcome conditions.
Coverage for IPW uses the pseudo-outcome variance that treats fitted
propensities as fixed, which is conservative in the correctly specified
linear settings.

\subsection{True-marginal and fitted-logit factorization diagnostics}\label{true-marginal-and-fitted-logit-factorization-diagnostics}

Proposition~\ref{prop:factorization} suggests a direct diagnostic for the simulation criticism that fitted separate propensities can mix factorization error with marginal-model misspecification. We therefore performed two population diagnostics under linear confounding and good positivity. First, we computed the exact DGP marginals $p_1(H)$ and $p_2(H)$, multiplied them to form $q_a(H)$, and evaluated the resulting Horvitz--Thompson contrasts by Monte Carlo integration over 2,000,000 independent draws of the baseline covariates. No treatment model is estimated in this true-marginal calculation, so any nonzero bias is caused by factorization alone. Second, to determine whether the main-effects Bernoulli working models materially change the contrast pattern, we solved their population score equations, multiplied the resulting fitted marginal probabilities, applied the same 0.005 probability floor and renormalization used in the finite-sample Separate-PS arm, and evaluated the contrasts on an independent 5,000,000-draw Monte Carlo sample. This fitted-logit population diagnostic adds marginal functional-form error while essentially removing finite-sample estimation noise. Batch Monte Carlo standard errors for the reported mean absolute biases were below 0.001.

\begin{table}[H]
\centering
\captionsetup{labelformat=empty}
\caption{\textbf{Table 3A.} True-marginal and fitted-logit factorization diagnostics under linear confounding and good positivity}
\label{tab:true-marginal-factorization}
\footnotesize
\setlength{\tabcolsep}{3.5pt}
\begin{tabular}{c c c c c}
\toprule
$\kappa$ & \shortstack{True-marginal factorized biases\\$(D,S,J,I)$} & \shortstack{Fitted-logit biases\\$(D,S,J,I)$} & \shortstack{True-marginal\\mean $|\mathrm{bias}|$} & \shortstack{Fitted simulation\\mean $|\mathrm{bias}|$} \\
\midrule
0.0 & (0.000, 0.000, 0.000, 0.000) & (0.000, 0.000, 0.000, 0.000) & 0.000 & 0.004 \\
0.8 & ($-$0.200, $-$0.277, $+$0.027, $+$0.503) & ($-$0.200, $-$0.278, $+$0.026, $+$0.504) & 0.251 & 0.259 \\
1.6 & ($-$0.545, $-$0.647, $-$0.252, $+$0.940) & ($-$0.543, $-$0.649, $-$0.253, $+$0.938) & 0.596 & 0.596 \\
\bottomrule
\end{tabular}
\end{table}

\noindent\textit{Note.} True-marginal values use the true member-specific marginal treatment probabilities from the DGP and therefore remove marginal propensity estimation and working-model error while retaining the false conditional-independence restriction. Fitted-logit values solve the population score equations for the two main-effects Bernoulli propensity working models and then apply the same 0.005 probability floor and renormalization used by the simulation. The final column is the finite-sample Separate-PS mean absolute bias averaged across sample sizes, as reported in Appendix D.1. The component-wise agreement between the true-marginal and fitted-logit bias vectors shows that the matching mean absolute biases are not hiding offsetting errors across contrasts; the dominant mechanism in this DGP is the false factorization itself.

The contrast-specific pattern is more consequential than the scalar mean absolute bias. At $\kappa=1.6$, the true-marginal factorized biases imply targets of approximately $-0.045$ for the direct effect (truth 0.50), $-0.347$ for the spillover effect (truth 0.30), 0.798 for the $11$-versus-$00$ joint effect (truth 1.05), and 1.190 for the additive interaction (truth 0.25). Thus factorizing even the true marginals essentially nullifies the direct contrast and moves it slightly below zero, reverses the sign of the spillover contrast, attenuates the joint contrast by about 24\%, and inflates the interaction to about 4.8 times its true value. The fitted-logit population calculation produces nearly the same contrast-specific pattern. Proposition~\ref{prop:factorization} explains why no single scalar dependence measure can be translated into one universal bias magnitude: $\delta(H)$ is filtered through state-specific support $q_a(H)$ and conditional outcome means $m_a(H)$.

\subsection{Primary simulation
results}\label{primary-simulation-results}

The correctly specified linear scenarios are the most important calibration for the paper's framing. APIM-style outcome regression was essentially unbiased and had the lowest mean RMSE of the implementable estimators in both linear conditions (0.108 under good positivity and 0.110 under near positivity). Residual treatment dependence did not invalidate a correctly specified outcome model. By contrast, fitted separate-PS IPW mean absolute bias increased from 0.004 at $\kappa=0$ to 0.259 at $\kappa=0.8$ and 0.596 at $\kappa=1.6$ under linear confounding and good positivity, while joint multinomial IPW and joint parametric AIPW remained near unbiased over the same dependence gradient. Table~3A shows that this pattern is not mainly an artifact of the fitted marginal logits: multiplying the exact DGP marginals yields population mean absolute biases of 0.251 and 0.596 at $\kappa=0.8$ and 1.6, and the fitted-logit population bias vectors track the true-marginal bias vectors component by component. The agreement therefore cannot be explained by offsetting contrast-specific errors hidden by a mean absolute bias summary. The simulation verifies the factorization mechanism isolated by Proposition~\ref{prop:factorization} while retaining the broader practical comparison among estimators.

Nonlinear confounding produced a different pattern because the
parametric outcome and treatment regressions were both misspecified.
Under good positivity, APIM-style OR, joint multinomial IPW, and
parametric AIPW had similar mean absolute bias (0.339, 0.362, and
0.377), while cross-fitted RF-AIPW reduced mean absolute bias to 0.048.
This does not show that joint treatment modeling alone solves
nonlinearity: the gain came from flexible nuisance learning. RF-AIPW
bias decreased from approximately 0.073 at n=500 to 0.046 at n=1,000 and
0.024 at n=3,000.

Flexible nuisance estimation did not solve weak support. Under nonlinear
confounding with near positivity, RF-AIPW retained the smallest mean
absolute bias among the methods using estimated nuisance functions
(0.166), but mean coverage fell to 0.773. Known-propensity joint IPW
also remained low-bias while its mean RMSE rose to 2.602, demonstrating
that knowledge of the true treatment mechanism cannot create information
in poorly supported joint states. The realized support diagnostics make
the calibration concrete. Averaged across replications, about 0.38\% and
1.64\% of observed states had true propensity below 0.05 in the
linear/good and linear/near conditions, and about 0.80\% and 1.59\% did
so in the nonlinear/good and nonlinear/near conditions. The median
within-replication minimum observed-state propensity was approximately
0.0327, 0.0110, 0.0136, and 0.0067, respectively. The HRS data are at
least as challenging in the lower tail: 3.3\% of observed states had
untruncated parametric propensity below 0.05, with a minimum of 0.0019.

\begin{figure}[H]
\centering
\includegraphics[width=0.88\textwidth]{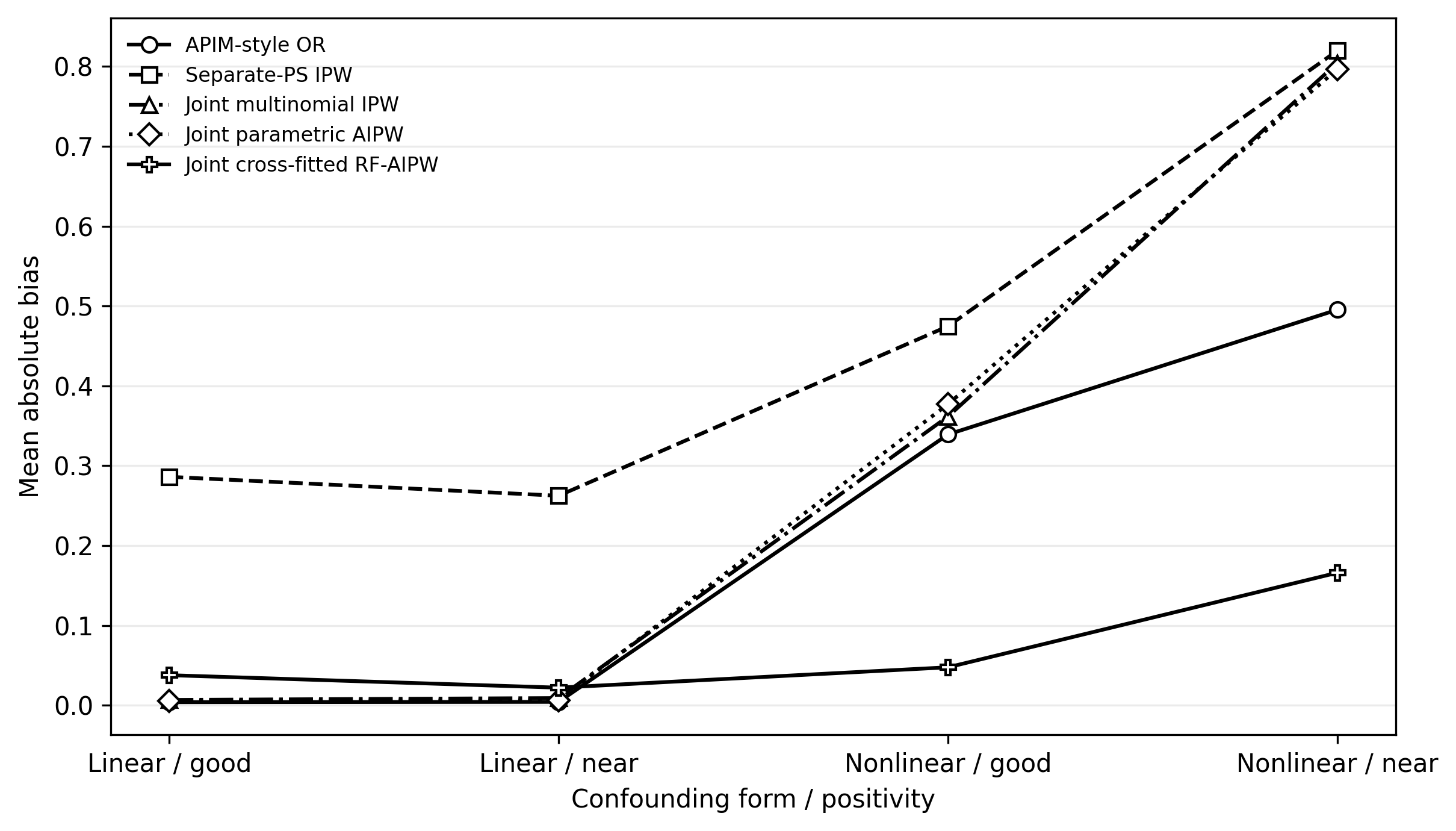}
\caption{Primary simulation mean absolute bias. Under correct linear outcome specification, APIM-style regression remains low-bias despite treatment dependence; the fitted separate-propensity weighting strategy deteriorates as $\kappa$ increases. Under nonlinear confounding, flexible nuisance learning rather than joint modeling by itself drives the improvement.}
\end{figure}

\subsection{Focused double-robustness
diagnostic}\label{focused-double-robustness-diagnostic}

The 36-scenario factorial changed the functional form of the treatment
and outcome mechanisms together, so it did not by itself isolate the
usual double-robustness claim. We conducted a focused 500-replication
diagnostic at n=1,000, κ=1.6, and good positivity in which linear versus
nonlinear treatment assignment and linear versus nonlinear outcome
surfaces were crossed separately. Parametric AIPW remained near unbiased
when either the multinomial propensity or the outcome regression was
correctly specified, and failed when both were misspecified. The
slightly smaller RMSE when the propensity model was misspecified but the
outcome model was correct (0.123 versus 0.136 when both were correct)
should not be interpreted as an efficiency gain from misspecification;
with 500 Monte Carlo replications and a correctly specified outcome
regression dominating the augmentation, small RMSE reversals of this
size can occur. This focused diagnostic is a verification of the
standard AIPW property, not a new interference result.

Table 4. Focused double-robustness diagnostic for joint parametric AIPW

\begin{longtable}[]{@{}
  >{\raggedright\arraybackslash}p{(\columnwidth - 6\tabcolsep) * \real{0.2500}}
  >{\raggedright\arraybackslash}p{(\columnwidth - 6\tabcolsep) * \real{0.2500}}
  >{\raggedright\arraybackslash}p{(\columnwidth - 6\tabcolsep) * \real{0.2500}}
  >{\raggedright\arraybackslash}p{(\columnwidth - 6\tabcolsep) * \real{0.2500}}@{}}
\toprule\noalign{}
\begin{minipage}[b]{\linewidth}\raggedright
\textbf{Nuisance specification}
\end{minipage} & \begin{minipage}[b]{\linewidth}\raggedright
\textbf{Mean \textbar bias|}
\end{minipage} & \begin{minipage}[b]{\linewidth}\raggedright
\textbf{Mean RMSE}
\end{minipage} & \begin{minipage}[b]{\linewidth}\raggedright
\textbf{Mean coverage}
\end{minipage} \\
\midrule\noalign{}
\endhead
\bottomrule\noalign{}
\endlastfoot
Both nuisance models correct & 0.006 & 0.136 & 0.948 \\
Propensity correct; outcome misspecified & 0.006 & 0.229 & 0.949 \\
Propensity misspecified; outcome correct & 0.008 & 0.123 & 0.951 \\
Both nuisance models misspecified & 0.308 & 0.417 & 0.507 \\
\end{longtable}

Note. 500 replications per cell; n=1,000; κ=1.6; good positivity.
Averages are over direct, spillover, joint, and interaction contrasts.

\subsection{Longitudinal extension}\label{longitudinal-extension}

A second 36-scenario simulation examined treatment-confounder feedback
over two dyadic treatment decisions. Because the HRS application in this
paper defines a single joint retirement transition, that simulation
addresses a different longitudinal scientific target. Its design and
results are therefore reported in Appendix A. The extension illustrates
why conventional adjustment for treatment-induced covariates can fail
for sustained dyadic regimes, but it is not used as evidence for the
point-treatment HRS application.

\section{Empirical Illustration: Retirement Transitions and
Depressive Symptoms in the
HRS}\label{empirical-illustration-retirement-transitions-and-depressive-symptoms-in-the-hrs}

\subsection{Data and cohort
construction}\label{data-and-cohort-construction}

We used the RAND HRS Longitudinal File 2022 (Version 1), which
harmonizes Health and Retirement Study data from 1992 through 2022. The
HRS is a nationally prominent longitudinal study of U.S. adults over age
50 and their spouses or partners (Sonnega et al., 2014). It is sponsored
by the National Institute on Aging (U01AG009740 and R01AG073289) and
conducted by the University of Michigan. The RAND longitudinal product
used here was released in May 2025. This is a secondary analysis of
deidentified public-use data.

The empirical example was selected to illustrate joint treatment
assignment in a naturally dyadic life transition without duplicating the
simulation DGP. Baseline candidates began at wave 4 (1998) and extended
through wave 14 (2018), allowing an exposure classification at t+1 and
an outcome at t+2. A reciprocal couple link was required at baseline.
Both partners had to be age 50-69 and working full- or part-time at
baseline. At t+1, each member had to be either still working
full/part-time or fully retired; unemployment, disability,
partial-retirement, and other labor-force states were excluded from this
specific binary contrast. The treatment vector was coded 00 if neither
partner was fully retired at t+1, 10 or 01 if one member retired, and 11
if both retired. The analytic cohort also required observed
member-specific RAND CES-D at t+2.

To avoid one additional source of post-baseline selection, the primary
cohort did not require couples to remain partnered after baseline.
Future spouse links were retained only as diagnostics. Among the final
couples, 98.1\% had the same baseline partner at t+1 and 96.5\% at t+2;
2,311 of 2,400 couples remained with the same partner through t+2. When
a couple met eligibility at more than one baseline, only the first
eligible baseline was retained, so the analytic rows represent unique
dyads rather than repeated risk-set entries.

\begin{figure}[H]
\centering
\includegraphics[width=0.88\textwidth]{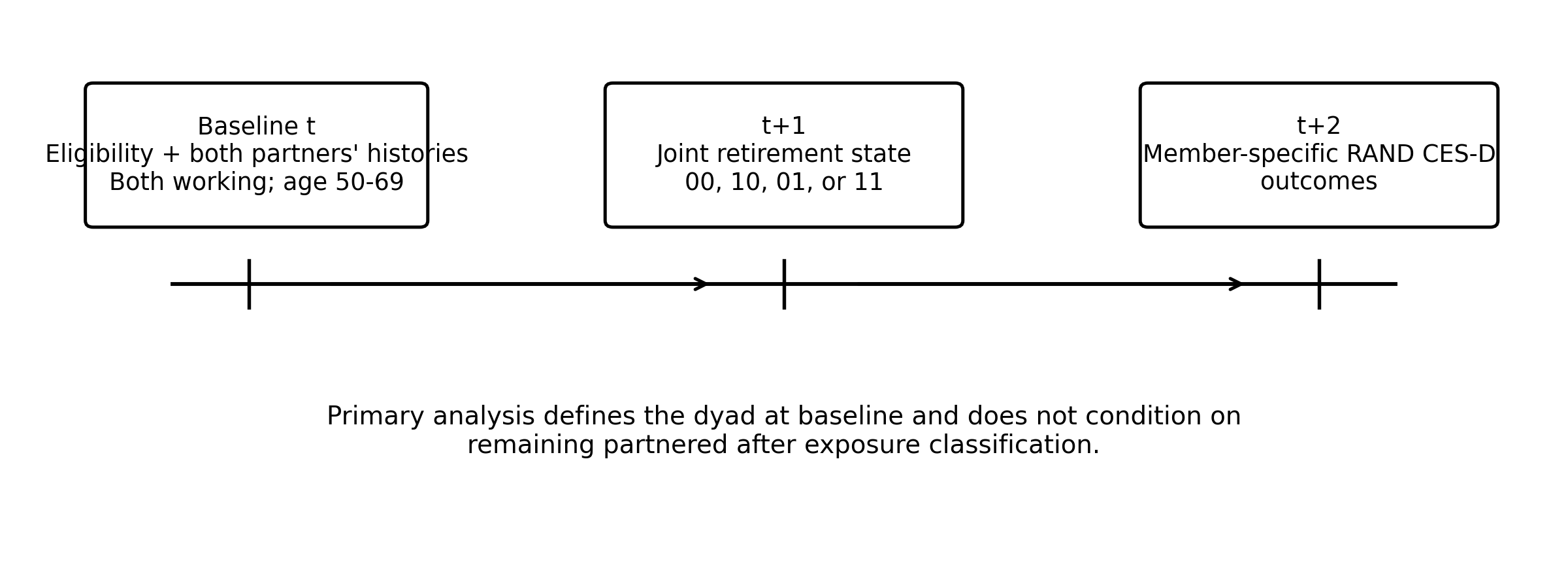}
\caption{HRS target-trial timeline. Baseline defines eligibility and the confounding history; the joint retirement state is classified one wave later, and member-specific RAND CES-D outcomes are measured at the following wave. Post-baseline partnership survival is not an eligibility condition in the primary analysis.}
\end{figure}

\subsection{Covariates and
estimators}\label{covariates-and-estimators}

The baseline adjustment set included both partners\textquotesingle{}
age, RAND CES-D, chronic-condition burden, years of education, gender,
race, Hispanic ethnicity, baseline full-time-work indicator, and
baseline-wave fixed effects. The analysis used no post-baseline
covariates for confounding adjustment. Member 1 was defined
administratively as the partner with the lower HHIDPN and member 2 as
the higher HHIDPN; substantive effects were reported as symmetric
averages so that this arbitrary ordering does not determine the causal
contrast.

We fit five estimators as complementary diagnostics rather than
designating one as empirically preferred. APIM-style adjusted outcome
regression included both retirement indicators, their interaction, and
baseline covariates. Separate-PS IPW multiplied two Bernoulli
propensities and therefore imposed conditional treatment independence.
Joint multinomial IPW modeled the four-state treatment directly. Joint
parametric AIPW combined multinomial propensity and linear outcome
models. Cross-fitted random-forest AIPW used 5 state-stratified folds
and 500 trees for each nuisance learner. The production analysis used a
0.005 probability floor before renormalization; Section 6.6 reports
sensitivity because that numerical choice is binding for rare-state
contrasts.

Table 5. HRS joint retirement-state distribution

\begin{longtable}[]{@{}
  >{\raggedright\arraybackslash}p{(\columnwidth - 4\tabcolsep) * \real{0.3333}}
  >{\raggedright\arraybackslash}p{(\columnwidth - 4\tabcolsep) * \real{0.3333}}
  >{\raggedright\arraybackslash}p{(\columnwidth - 4\tabcolsep) * \real{0.3333}}@{}}
\toprule\noalign{}
\begin{minipage}[b]{\linewidth}\raggedright
\textbf{Joint state}
\end{minipage} & \begin{minipage}[b]{\linewidth}\raggedright
\textbf{Couples}
\end{minipage} & \begin{minipage}[b]{\linewidth}\raggedright
\textbf{Percent}
\end{minipage} \\
\midrule\noalign{}
\endhead
\bottomrule\noalign{}
\endlastfoot
00: neither retires & 2,135 & 89.0\% \\
10: member 1 retires & 137 & 5.7\% \\
01: member 2 retires & 88 & 3.7\% \\
11: both retire & 40 & 1.7\% \\
\end{longtable}

\emph{Note.} N=2,400 unique couples. Member labels are administrative;
causal estimates average symmetrically over both orientations.

\subsection{Positivity and treatment-state
support}\label{positivity-and-treatment-state-support}

The joint-state distribution was highly unbalanced: 89.0\% of couples
remained in 00, whereas only 40 couples (1.7\%) entered 11. Before
probability flooring, the parametric joint model had minimum
observed-state propensity 0.0019, first percentile 0.0195, 3.33\% of
observed states below 0.05, and maximum inverse weight 532. After the
0.005 floor and renormalization, the maximum weight was approximately
201. The cross-fitted RF model showed the same practical problem after
flooring, with about 3.7\% below 0.05. These lower-tail diagnostics are
at least as severe as the primary simulation\textquotesingle s
near-positivity conditions, where approximately 1.6\% of realized states
had true propensity below 0.05 on average. Thus the empirical
illustration lies in the operating regime where the simulation showed
degraded RF-AIPW coverage.

Residual dyadic treatment dependence was also empirically substantial.
We fit a constrained joint-treatment model with the same baseline
covariates in both member-specific linear predictors and a constant
association parameter κ in the 11 log-linear term. Under this
parameterization κ is the conditional log odds ratio between the two
retirement indicators, exactly matching the dependence parameter used in
the linear simulation DGP. The estimate was κ=1.69 (SE 0.25),
corresponding to a conditional odds ratio of 5.43 (95\% CI 3.32 to
8.87). A one-degree-of-freedom likelihood-ratio comparison with κ=0 gave
χ²=41.40, p\textless0.001. The HRS dependence magnitude is therefore
close to, and slightly above, the simulation\textquotesingle s strongest
κ=1.6 condition. This anchors the separate-PS stress test in an observed
dyadic setting, although the simulation bias magnitude should not be
transported numerically to HRS.

Table 6. Conditional treatment-dependence diagnostic in the HRS cohort

\begin{longtable}[]{@{}
  >{\raggedright\arraybackslash}p{(\columnwidth - 6\tabcolsep) * \real{0.2500}}
  >{\raggedright\arraybackslash}p{(\columnwidth - 6\tabcolsep) * \real{0.2500}}
  >{\raggedright\arraybackslash}p{(\columnwidth - 6\tabcolsep) * \real{0.2500}}
  >{\raggedright\arraybackslash}p{(\columnwidth - 6\tabcolsep) * \real{0.2500}}@{}}
\toprule\noalign{}
\begin{minipage}[b]{\linewidth}\raggedright
\textbf{Quantity}
\end{minipage} & \begin{minipage}[b]{\linewidth}\raggedright
\textbf{Estimate}
\end{minipage} & \begin{minipage}[b]{\linewidth}\raggedright
\textbf{SE / df}
\end{minipage} & \begin{minipage}[b]{\linewidth}\raggedright
\textbf{95\% CI / p}
\end{minipage} \\
\midrule\noalign{}
\endhead
\bottomrule\noalign{}
\endlastfoot
Residual log-odds ratio κ & 1.692 & 0.251 & {[}1.200, 2.183{]} \\
Conditional odds ratio exp(κ) & 5.43 & -\/- & {[}3.32, 8.87{]} \\
LR test of κ=0 & χ²=41.40 & df=1 & p\textless0.001 \\
κ × centered absolute age gap & -0.113 & 0.074 & {[}-0.257, 0.032{]};
p=0.126 \\
\end{longtable}

Note. The constrained model uses the same baseline covariates as the
treatment models and parameterizes η11=η10+η01+κ. Thus κ is directly
comparable to the primary simulation\textquotesingle s
residual-dependence parameter. Orientation, calendar-period, and age-gap
interactions are sensitivity checks; none implies that κ is homogeneous
over the full covariate history.

We also checked whether the constant-κ summary was being driven by
member ordering, calendar period, or a simple measured feature of the
dyad. Orientation-specific conditional logistic models gave log
odds-ratio estimates of 1.698 (SE 0.248; 95\% CI 1.211 to 2.184) for A1
conditional on A2 and 1.720 (SE 0.251; 95\% CI 1.227 to 2.213) for A2
conditional on A1, indicating little orientation asymmetry. Allowing the
dependence coefficient to differ before versus after 2010 gave κ=1.603
(SE 0.283) in earlier baselines and κ=1.985 (SE 0.473) in later
baselines; the early-versus-late interaction was 0.382 (SE 0.537,
p=0.477). We then allowed the partner-treatment coefficient to vary with
centered absolute age gap. The κ-by-age-gap interaction was -0.113 per
year (SE 0.074; 95\% CI -0.257 to 0.032; p=0.126). The implied
conditional log odds ratio was about 2.00 at a one-year age gap and 1.55
at a five-year gap. These checks still do not establish homogeneous
dependence in the full covariate history H, but they provide a direct
observed-H heterogeneity check in addition to orientation and
calendar-time sensitivity. We therefore treat κ=1.69 as a working
summary, not a structural constant. The age-gap interaction is the
manuscript\textquotesingle s direct measured-H heterogeneity check; its
uncertainty is wide enough that meaningful effect modification cannot be
ruled out, so the constant-κ simulation should be read as a controlled
calibration rather than a literal model of the HRS treatment mechanism.

The dependence estimate is a diagnostic, not a bias coefficient. Under
an interior 2×2 table, a conditional odds ratio above one implies
positive conditional covariance and therefore δ(H)$>0$, so the
fitted HRS treatment mechanism is incompatible with exact factorization
into its two member-specific marginals. Proposition 2 nevertheless shows
why κ=1.69 cannot be converted into a numerical causal-estimator bias:
the bias also depends on the history-specific factorized support qa(H)
and the state-specific conditional outcome means. This is why the
simulation\textquotesingle s separate-PS bias is used only to calibrate
a failure mode, not transported to the HRS estimates.

\subsection{Effect estimates}\label{effect-estimates}

Table 7. HRS estimates of retirement-transition contrasts on subsequent
CES-D

\begin{longtable}[]{@{}
  >{\raggedright\arraybackslash}p{(\columnwidth - 8\tabcolsep) * \real{0.2000}}
  >{\raggedright\arraybackslash}p{(\columnwidth - 8\tabcolsep) * \real{0.2000}}
  >{\raggedright\arraybackslash}p{(\columnwidth - 8\tabcolsep) * \real{0.2000}}
  >{\raggedright\arraybackslash}p{(\columnwidth - 8\tabcolsep) * \real{0.2000}}
  >{\raggedright\arraybackslash}p{(\columnwidth - 8\tabcolsep) * \real{0.2000}}@{}}
\toprule\noalign{}
\begin{minipage}[b]{\linewidth}\raggedright
\textbf{Estimator}
\end{minipage} & \begin{minipage}[b]{\linewidth}\raggedright
\textbf{Estimand}
\end{minipage} & \begin{minipage}[b]{\linewidth}\raggedright
\textbf{Estimate}
\end{minipage} & \begin{minipage}[b]{\linewidth}\raggedright
\textbf{95\% CI lower}
\end{minipage} & \begin{minipage}[b]{\linewidth}\raggedright
\textbf{95\% CI upper}
\end{minipage} \\
\midrule\noalign{}
\endhead
\bottomrule\noalign{}
\endlastfoot
APIM-style adjusted OR & Actor & 0.192 & -0.072 & 0.451 \\
Separate-PS IPW & Actor & -0.019 & -0.375 & 0.338 \\
Joint multinomial IPW & Actor & 0.202 & -0.218 & 0.621 \\
Joint parametric AIPW & Actor & 0.241 & -0.055 & 0.537 \\
Cross-fitted RF-AIPW & Actor & 0.141 & -0.316 & 0.599 \\
APIM-style adjusted OR & Spillover & -0.039 & -0.225 & 0.167 \\
Separate-PS IPW & Spillover & -0.268 & -0.576 & 0.040 \\
Joint multinomial IPW & Spillover & -0.109 & -0.472 & 0.254 \\
Joint parametric AIPW & Spillover & 0.023 & -0.255 & 0.302 \\
Cross-fitted RF-AIPW & Spillover & 0.309 & -0.280 & 0.898 \\
APIM-style adjusted OR & Joint & 0.336 & -0.066 & 0.778 \\
Separate-PS IPW & Joint & 0.649 & -0.282 & 1.579 \\
Joint multinomial IPW & Joint & -0.325 & -0.779 & 0.128 \\
Joint parametric AIPW & Joint & 0.150 & -0.080 & 0.379 \\
Cross-fitted RF-AIPW & Joint & -0.069 & -0.481 & 0.342 \\
APIM-style adjusted OR & Interaction & 0.183 & -0.341 & 0.716 \\
Separate-PS IPW & Interaction & 0.936 & -0.141 & 2.012 \\
Joint multinomial IPW & Interaction & -0.418 & -1.218 & 0.381 \\
Joint parametric AIPW & Interaction & -0.115 & -0.632 & 0.403 \\
Cross-fitted RF-AIPW & Interaction & -0.520 & -1.436 & 0.397 \\
\end{longtable}

Note. Positive values indicate higher subsequent CES-D. APIM-style
intervals are percentile intervals from 2,000 dyad bootstrap resamples.
IPW/AIPW intervals use dyad-level pseudo-outcome Wald standard errors.
Simulation results show that the RF-AIPW Wald interval can undercover
when support resembles the HRS lower tail, so these intervals should not
be interpreted as definitive evidence for or against a null effect.

The estimates do not identify a single empirically dominant method.
Cross-fitted RF-AIPW gave 0.141 (95\% CI -0.316 to 0.599) for
actor/direct, 0.309 (-0.280 to 0.898) for spillover, -0.069 (-0.481 to
0.342) for the joint contrast, and -0.520 (-1.436 to 0.397) for
interaction. APIM-style adjusted regression gave 0.192 (bootstrap 95\%
CI -0.072 to 0.451), -0.039 (-0.225 to 0.167), 0.336 (-0.066 to 0.778),
and 0.183 (-0.341 to 0.716), respectively. Every reported interval
included zero at the original 0.005 floor. That fact is descriptive
rather than reassuring: in the nonlinear/near-positivity simulation,
RF-AIPW coverage averaged only 0.773, so the HRS intervals may
themselves be anticonservative.

Point estimates varied materially across estimators. Actor estimates
ranged from -0.019 to 0.241, spillover from -0.268 to 0.309, joint from
-0.325 to 0.649, and interaction from -0.520 to 0.936 at the 0.005
floor. Such disagreement is itself a design diagnostic: the rare 11
state, treatment-model factorization, nonlinear nuisance structure, and
stabilization choices all matter on this dataset. The empirical
application therefore demonstrates how the workflow reveals fragility;
it does not establish that one estimator has recovered the true
retirement effect.

\begin{figure}[H]
\centering
\includegraphics[width=0.88\textwidth]{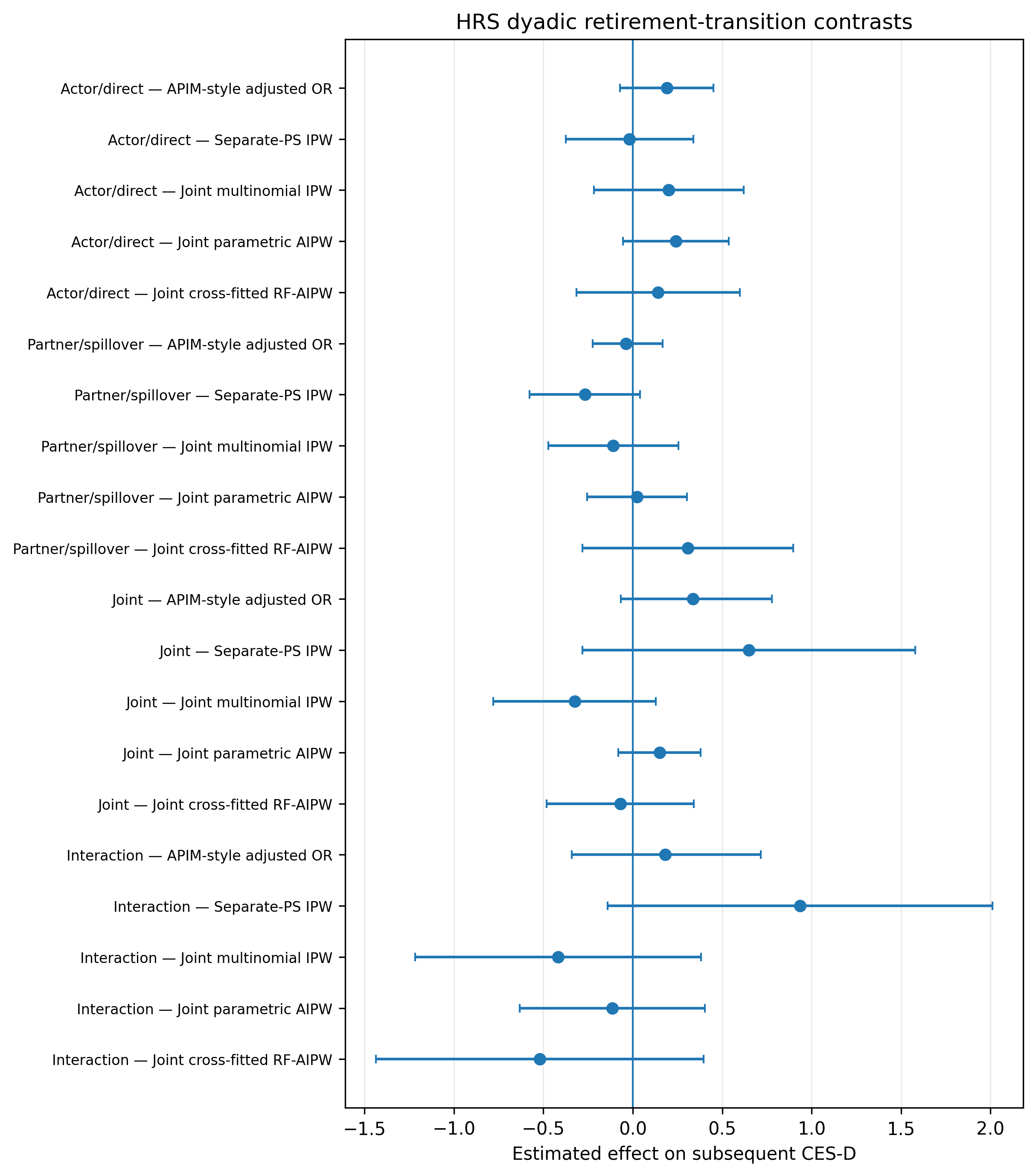}
\caption{HRS estimator estimates and 95\% intervals for actor, spillover, joint, and interaction contrasts. APIM-style intervals are from 2,000 dyad bootstrap resamples; IPW/AIPW intervals use dyad-level pseudo-outcome Wald standard errors. The figure is diagnostic rather than a ranking of methods; sparse support and simulation undercoverage limit substantive interpretation.}
\end{figure}

\subsection{Cohort flow, overlap, and partner-stability
diagnostics}\label{cohort-flow-overlap-and-partner-stability-diagnostics}

The baseline risk sets varied across HRS waves. Before the
first-eligible restriction, eligible couple-trials ranged from 377 in
2018 to 656 in 2010; retaining only the first eligible baseline produced
2,400 independent dyadic units. Administrative member ordering was not
demographically neutral: member 1 was 61.6\% male with mean age 56.1,
whereas member 2 was 38.1\% male with mean age 54.4. The symmetric
reported estimands remain invariant to swapping the two labels because
both orientations are averaged, but member-specific components should
not be interpreted as exchangeable or gender-neutral. Mean baseline
CES-D was 0.94 and 1.04 and mean t+2 CES-D was 0.94 and 1.01 for members
1 and 2, respectively.

Post-baseline partner stability was high but not used to define the
primary cohort. Among the 2,400 baseline-defined couples, 98.1\%
retained the same baseline partner at t+1 and 96.5\% at t+2; 2,311
couples retained the same partner at both follow-ups. This diagnostic
supports the practical coherence of treating the baseline pair as the
dyadic unit while avoiding an eligibility rule that conditions on future
relationship status. A stable-partner-through-t+2 sensitivity
restriction retained 2,313 couples and produced qualitatively similar
conclusions; Appendix E reports the estimates side by side.

Overlap was adequate for the dominant 00 state but substantially weaker
for the rare retirement states. The multinomial model\textquotesingle s
median observed-state propensity was 0.926, reflecting the fact that
most eligible couples remained in 00, while the first percentile was
0.019 and the minimum was approximately 0.005. About 3.3\% of
observations had observed-state propensity below 0.05, and the largest
inverse weight was approximately 201. The cross-fitted random-forest
joint model produced a similar pattern, with 3.7\% below 0.05 and a
maximum weight near 203. Agreement across nuisance approaches suggests
that the weak support is a feature of the data rather than merely an
artifact of the parametric propensity specification.

\subsection{Probability-floor sensitivity and estimator
disagreement}\label{probability-floor-sensitivity-and-estimator-disagreement}

The 0.005 probability floor is binding, so we examined how the
parametric joint estimators changed when the fitted probabilities were
left untruncated or floored at 0.005, 0.010, and 0.020. Actor and
spillover estimates were comparatively stable, whereas the joint and
interaction contrasts moved materially because they depend most heavily
on the rare 11 state. For parametric AIPW, the joint estimate changed
from 0.026 untruncated to 0.150, 0.192, and 0.248 as the floor
increased; the interaction changed from -0.238 to -0.114, -0.087, and
0.012. Joint IPW showed analogous sensitivity. These results make clear
that the floor is not a cosmetic numerical setting in this application.

The HRS pattern should not be read as contradicting the simulation floor
diagnostic in Appendix D.4. A direct code audit confirmed that joint IPW
and parametric AIPW use the same untruncated multinomial probabilities
and, at every threshold c, the same transformation
$g_a^{(c)}=\max\{g_a,c\}/\sum_b\max\{g_b,c\}$. The different trajectories
in Table 8 therefore cannot be attributed to different renormalization.
They arise because the two estimating equations react differently to the
same change in g. IPW uses observed-outcome terms $Y/g_A$, whereas AIPW
modifies residual corrections $(Y-m_A)/g_A$ around an outcome-regression
baseline. For the lowest-propensity observed 11 case driving the first
0-to-0.005 change, both observed CES-D outcomes were zero while the
fitted 11-state means were nonzero. Its raw IPW contribution was
therefore close to zero, but its AIPW residual correction was large;
flooring attenuated that correction and moved AIPW much more than IPW.
Under the focused linear simulation the nuisance models are correctly
specified, so the same operation mainly regularizes finite-sample
residual corrections. The empirical and simulation floor analyses
therefore answer different questions.

Table 8. HRS sensitivity of rare-state contrasts to propensity flooring

\begin{longtable}[]{@{}
  >{\raggedright\arraybackslash}p{(\columnwidth - 6\tabcolsep) * \real{0.2500}}
  >{\raggedright\arraybackslash}p{(\columnwidth - 6\tabcolsep) * \real{0.2500}}
  >{\raggedright\arraybackslash}p{(\columnwidth - 6\tabcolsep) * \real{0.2500}}
  >{\raggedright\arraybackslash}p{(\columnwidth - 6\tabcolsep) * \real{0.2500}}@{}}
\toprule\noalign{}
\begin{minipage}[b]{\linewidth}\raggedright
\textbf{Probability floor}
\end{minipage} & \begin{minipage}[b]{\linewidth}\raggedright
\textbf{Estimator}
\end{minipage} & \begin{minipage}[b]{\linewidth}\raggedright
\textbf{Joint}
\end{minipage} & \begin{minipage}[b]{\linewidth}\raggedright
\textbf{Interaction}
\end{minipage} \\
\midrule\noalign{}
\endhead
\bottomrule\noalign{}
\endlastfoot
Untruncated & Parametric AIPW & 0.026 & -0.238 \\
0.005 & Parametric AIPW & 0.150 & -0.114 \\
0.010 & Parametric AIPW & 0.192 & -0.087 \\
0.020 & Parametric AIPW & 0.248 & 0.012 \\
Untruncated & Joint IPW & -0.324 & -0.418 \\
0.005 & Joint IPW & -0.326 & -0.419 \\
0.010 & Joint IPW & -0.414 & -0.502 \\
0.020 & Joint IPW & -0.501 & -0.327 \\
\end{longtable}

Note. Sensitivity uses one common untruncated multinomial
joint-propensity fit. At each threshold all four fitted probabilities
are floored and then renormalized identically before either IPW or AIPW
is computed. Thus differences between estimator trajectories are not due
to different truncation rules. The cross-fitted RF production run stored
only the floored probabilities, so this table is a parametric diagnostic
of stabilization rather than a redefinition of the target estimand.

The sharp AIPW movement at the first floor is therefore interpretable
rather than anomalous: clipping changes a rare-cell residual correction
even when the corresponding raw observed-outcome contribution is small.
This is precisely why truncation sensitivity should be reported
separately for weighting and augmented estimators.

Estimator disagreement can arise from several sources simultaneously:
sparse support for 11, factorization of the treatment mechanism,
outcome-model restrictions, flexible-learning error, and propensity
stabilization. The simulations help diagnose these possibilities but
cannot identify which empirical estimator is correct. Accordingly, no
method is designated \textquotesingle preferred\textquotesingle{} in the
HRS section. Cross-fitted RF-AIPW is useful because it relaxes
parametric nuisance restrictions; parametric AIPW is useful because its
structure is transparent; joint IPW isolates the treatment-model
component; separate-PS IPW exposes the conditional-independence
assumption; and APIM-style regression shows what a correctly specified
outcome-model strategy would target under the identifying assumptions.

The stable-partner sensitivity provides another form of triangulation.
Restricting to couples observed with the same partner through t+2
changed the RF-AIPW actor estimate from 0.141 to 0.202, the spillover
estimate from 0.309 to 0.388, the joint estimate from -0.069 to -0.154,
and the interaction estimate from -0.520 to -0.743. The intervals
remained wide and crossed zero. This similarity reduces concern that the
primary results are driven by the small fraction of couples whose
partner link changed, although it does not address informative
attrition, death, or unmeasured confounding.

\section{Discussion}\label{discussion}

Dyadic studies often begin from a scientifically sensible intuition: one
person\textquotesingle s exposure may matter for both people. APIM
provides a coherent statistical representation of that interdependence,
but the causal step is separate. Proposition 1 states the restrictive
case in which the familiar treatment coefficients coincide with
standardized causal contrasts. Proposition 2 then shows exactly what is
lost when a dependent four-state treatment law is replaced by the
product of its two marginals, and Corollary 2.1 shows why correct
outcome modeling can still protect an augmented estimator. The
target-trial audit identifies the separate design assumptions needed
before any of these estimators can be given a causal interpretation.

The simulations follow directly from those identities. In the linear
DGP, APIM-style outcome regression was unbiased and had the smallest
RMSE among implementable estimators. Fitted separate-PS IPW deteriorated
as κ increased, and the true-marginal factorization calculation reproduced
essentially the same large-dependence bias after removing marginal
propensity estimation error. This isolates factorization as the dominant
mechanism in that controlled setting. The focused double-robustness
check then confirmed the complementary implication of Corollary 2.1:
parametric AIPW remained near unbiased when either its joint propensity
or outcome nuisance model was correct. Joint treatment modeling is
therefore protection against a specific treatment-mechanism
misspecification, not evidence that a correctly specified APIM outcome
model is intrinsically invalid.

The nonlinear scenarios provide a complementary lesson. Directly
modeling the four-state propensity did not rescue misspecified
parametric nuisance models; APIM-style OR, joint multinomial IPW, and
parametric AIPW were all biased when the confounding functions were
nonlinear. Cross-fitted random forests reduced that bias, but weak
positivity still degraded coverage. This separates three issues that
should not be conflated: the joint versus factorized treatment law,
functional-form adequacy of the nuisance models, and the amount of
empirical support for the requested regime contrast.

The longitudinal simulation, reported in Appendix A, addresses a
different scientific target. Once a time-varying covariate is affected
by prior treatment and also predicts later treatment and outcome,
conventional adjustment can miss a sustained-regime total effect. That
extension motivates caution in genuinely repeated-treatment APIM
applications, but it is not part of the identification argument for the
single retirement transition analyzed in HRS.

\subsection{Relation to the interference
literature}\label{relation-to-the-interference-literature}

The framework sits inside, rather than alongside, the established
interference literature. Hudgens and Halloran (2008) formalized direct
and indirect effects under interference; Tchetgen Tchetgen and
VanderWeele (2012) developed identification tools; Liu et al. (2019)
gave doubly robust estimators under partial interference; Mussavi Rizi
et al. (2024) studied dynamic treatment regimes on dyadic networks; and
Qu et al. (2026) developed semiparametric estimators under heterogeneous
partial interference. Complementary behavioral-science work now
distinguishes multiple ways an intervention itself can be dyadic,
including who participates, what is done, the mechanisms targeted, and
the outcomes considered (Berli et al., 2026). The present paper does not
compete with those contributions. Its role is to translate actor-partner
scientific questions into explicit counterfactual contrasts, formalize
the treatment-factorization assumption in the binary dyadic case, make
emulation gaps auditable, and connect those pieces to an implementable
analysis workflow.

\subsection{Interpretation of the HRS
example}\label{interpretation-of-the-hrs-example}

The HRS application should not be read as a substantive conclusion that
retirement has no effect on depressive symptoms. The both-retire state
contained only 40 couples, the untruncated maximum inverse weight
exceeded 500, estimates depended on propensity flooring, and the
simulation condition most similar to the HRS lower tail produced RF-AIPW
undercoverage. The empirical result is therefore diagnostic: retirement
assignments remained conditionally dependent after extensive baseline
adjustment ($\kappa\approx1.69$), separate-propensity factorization was empirically
implausible, and the 11-related joint and interaction targets were only
weakly supported.

\subsection{Limitations}\label{limitations}

First, the proposed workflow does not solve unmeasured confounding.
Joint conditional exchangeability remains an untestable identifying
assumption, and couples may share retirement planning, financial shocks,
caregiving demands, or health information not fully represented in the
measured baseline history. In particular, health or employment shocks
arising between baseline and the t+1 retirement classification could
affect both retirement and later depressive symptoms; baseline-only
adjustment cannot remove confounding created by such exposure-window
events.

Second, cohort construction conditions on post-baseline information. The
binary retirement contrast excludes dyads when either member is
unemployed, disabled, partially retired, or in another labor-force state
at t+1, and the analysis requires observed t+2 CES-D. These restrictions
define a clear working-versus-retired comparison but can induce
selection if the excluded states or outcome observation are related to
determinants of later depression. We did not use
inverse-probability-of-selection or censoring weights, and
survey-generalization weights were also not applied; the empirical
target is therefore the selected analytic cohort rather than
automatically the full U.S. older-adult population.

Third, positivity is a central constraint because four joint states must
be supported rather than two marginal treatment levels. The HRS
both-retire state contained only 40 couples, and probability-floor
sensitivity materially changed the joint and interaction estimates.
Stabilization can improve numerical behavior, but it changes the
estimating equation and can alter finite-sample bias. Alternative
stochastic targets may be better supported, but they answer a different
scientific question.

Fourth, the outcome is measured at t+2, roughly one biennial wave after
retirement classification at t+1. Retirement effects on mental health
may evolve over time, so the analysis estimates a contrast for this
specific follow-up window rather than a stable or long-run retirement
effect.

Fifth, the cross-fitted random-forest implementation is one practical
nuisance-learning strategy rather than a uniquely optimal algorithm. The
main simulation used fewer folds and trees than the HRS analysis for
computational reasons, and no single learner is guaranteed to dominate
in the rare 11 cell.

Sixth, the simulations were intentionally stylized. They examine
selected failure modes and do not span all forms of assortative mating,
latent shared frailty, measurement error, treatment versions, nonbinary
exposures, informative dyad dissolution, or higher-order networks. The
focused double-robustness check verifies a standard AIPW property but is
not a comprehensive estimator benchmark.

\subsection{A practical workflow for applied dyadic causal
analyses}\label{a-practical-workflow-for-applied-dyadic-causal-analyses}

The framework suggests a concrete workflow that can be used before
fitting any APIM or causal estimator. First, write the exposure as a
joint dyadic intervention and list the treatment states required by the
scientific contrast. If the treatment cannot be meaningfully manipulated
or described as a strategy, the causal language should be narrowed.
Second, define the target population and a dyadic time zero using only
pre-exposure information. Third, draw a temporal causal diagram that
includes both members\textquotesingle{} prior treatments, outcomes, and
shared causes. This step is especially important when one
partner\textquotesingle s prior outcome affects the
other\textquotesingle s later treatment.

Fourth, decide whether the dyad is distinguishable. If roles are
meaningful, report member-specific causal contrasts; otherwise use a
symmetric standardization rather than arbitrarily interpreting one
orientation. Fifth, examine the empirical distribution of the four joint
states before committing to a complex estimator. A rare 11 state can
make the joint and interaction estimands weakly supported even when
actor and spillover contrasts are reasonably estimable. Sixth, choose a
treatment model that represents the joint assignment mechanism. Separate
propensities should be used only when conditional independence is
substantively defensible or when they are deliberately included as a
diagnostic comparator.

Seventh, select nuisance estimators according to the expected complexity
of the data rather than treating machine learning as automatically
superior. Parametric models are efficient and interpretable when well
specified. Cross-fitted flexible learners are attractive when nonlinear
confounding or interactions are plausible and the sample size supports
them. Eighth, diagnose positivity using state counts and estimated
probabilities, not only marginal treatment prevalence. Ninth, if
treatment repeats over time, ask whether any planned adjustment variable
is affected by prior treatment. If so, a longitudinal g-method or
another appropriately defined longitudinal estimand may be required.

Finally, separate three layers when reporting results: the scientific
estimand, the identification assumptions, and the estimator. Statements
such as \textquotesingle the partner effect was 0.31\textquotesingle{}
are incomplete because the same label can refer to a regression
coefficient, a marginal spillover contrast, or a model-specific
conditional contrast. A transparent report names the joint states being
contrasted, the target population over which the contrast is averaged,
the nuisance models used for identification, and the diagnostics that
determine whether the contrast is empirically supported. The TARGET
reporting principles are useful here even when the hypothetical trial is
dyadic rather than individually randomized, because they force the
design and analysis decisions to be stated separately (Cashin et al.,
2025).

\subsection{Extensions and research
directions}\label{extensions-and-research-directions}

Binary exposures are convenient but not essential. When deterministic
joint regimes are weakly supported, stochastic interventions that remain
closer to the observed treatment mechanism may provide alternative
causal targets. Developing such policies for dependent dyadic treatments
is a separate methodological problem and is not pursued here.

The same ideas extend beyond couples. Parent-child dyads,
patient-caregiver pairs, teacher-student pairs, and matched peers can
all be represented as small interference clusters when the scientific
exposure is member-specific and cross-unit effects are plausible. For
households with more than two members, however, the joint treatment
space expands exponentially. Partial-interference estimators that
summarize exposure mappings or use cluster-level policies may be more
appropriate than enumerating every treatment vector. The dyadic case is
valuable precisely because it is the smallest setting in which
interference, joint propensity modeling, and actor-partner language meet
without the combinatorial complexity of large networks.

A further extension is sensitivity analysis for unmeasured shared
causes. The present paper assumes exchangeability after conditioning on
measured dyadic history. Yet shared financial expectations, relationship
quality, caregiving negotiations, or latent health information can
affect both partners\textquotesingle{} treatment decisions and outcomes.
Developing sensitivity analyses parameterized at the dyad level would
complement the joint-treatment framework. Such analyses should
distinguish an omitted common cause that shifts both
members\textquotesingle{} treatment probabilities from an omitted
member-specific cause that primarily affects one treatment-outcome path.

Longitudinal mediation is another open direction. In many substantive
studies, relationship quality, caregiving, or shared behavior is both a
mechanism through which one partner affects the other and a time-varying
confounder for later treatment. APIMeM and related dyadic mediation
models provide rich associational decompositions, while causal mediation
under interference requires cross-world or interventional definitions
that are substantially more demanding. A useful next step is therefore
not to relabel conventional dyadic mediation paths as causal, but to
define longitudinal interventional analogues that remain identifiable
under plausible dyadic histories.

\section{Conclusion}\label{conclusion}

For applied work, the recommendation is constructive. Start from the
scientific actor-partner question and write the corresponding four-state
intervention before fitting a model. Proposition 1 shows when a familiar
APIM coefficient can legitimately carry that causal interpretation;
outside that special case, standardize regime-specific outcome
predictions to the target population. If weighting or augmentation is
used, treat the four-state joint treatment law as the default object.
Proposition 2 shows that multiplying even perfectly known
member-specific propensities discards one conditional dependence
coordinate, δ(H), and can bias causal contrasts in state- and
outcome-specific ways. Diagnose that dependence, joint-state support,
truncation sensitivity, and estimator disagreement as part of the
substantive result. When a deterministic joint state is too sparse,
narrow the causal claim or adopt a better-supported policy estimand. The
goal is not to replace APIM, but to align the scientific intervention,
identification assumptions, treatment mechanism, and estimator.

\section*{Data and Reproducibility}\addcontentsline{toc}{section}{Data and Reproducibility}

The Monte Carlo studies were implemented with fixed seeds, scenario-level checkpoints, raw replication output, summary files, and session information. The true-marginal factorization diagnostic in Table~3A evaluates the Horvitz--Thompson population functional under the exact DGP marginals using 2,000,000 independent standard-normal covariate draws. The fitted-logit population diagnostic solves the population score equations for the two Bernoulli main-effects working models on a separate large Monte Carlo sample and evaluates the resulting factorized functional on 5,000,000 additional draws after applying the simulation's 0.005 probability floor and renormalization; batch Monte Carlo standard errors for the displayed mean absolute biases are below 0.001. The reproducibility materials also include the focused double-robustness diagnostic, realized-propensity calibration, simulation probability-floor sensitivity, the HRS conditional-dependence model, orientation/calendar/age-gap $\kappa$ sensitivity checks, dyad-bootstrap APIM intervals, member-label diagnostics, and an explicit common-normalization audit of HRS propensity-floor sensitivity. The HRS analysis used the public RAND HRS Longitudinal File 2022 (Version 1); redistribution of the source data is governed by HRS/RAND terms, while analysis code and derived nonidentifying summaries can be shared with the manuscript.

\clearpage
\section*{References}
\addcontentsline{toc}{section}{References}

Bang, H., \& Robins, J. M. (2005). Doubly robust estimation in missing
data and causal inference models. Biometrics, 61(4), 962-973.
\url{https://doi.org/10.1111/j.1541-0420.2005.00377.x}

Berli, C., Scholz, U., Allen, J. M., Di Maio, S., Höhener, P. S., Knoll,
N., Luszczynska, A., Nakamura, M. S., Simpson, J. A., Stadler, G.,
Villinger, K., Wilhelm, L. O., \& Rothman, A. J. (2026). What makes an
intervention dyadic? Introducing the DYADIC meta-framework to Describe
Your focAl Dyadic Intervention Components. Annals of Behavioral
Medicine, 60(1), kaaf102. \url{https://doi.org/10.1093/abm/kaaf102}

Cashin, A. G., Hansford, H. J., Hernan, M. A., et al. (2025).
Transparent reporting of observational studies emulating a target trial:
The TARGET Statement. JAMA, 334(12), 1084-1093.
\url{https://doi.org/10.1001/jama.2025.13350}

Chernozhukov, V., Chetverikov, D., Demirer, M., Duflo, E., Hansen, C.,
Newey, W., \& Robins, J. (2018). Double/debiased machine learning for
treatment and structural parameters. The Econometrics Journal, 21(1),
C1-C68. \url{https://doi.org/10.1111/ectj.12097}

Health and Retirement Study. (2025). RAND HRS Longitudinal File 2022
(V1) public use dataset. University of Michigan, Ann Arbor, MI.
Sponsored by the National Institute on Aging (U01AG009740 and
R01AG073289).

Hernan, M. A., \& Robins, J. M. (2016). Using big data to emulate a
target trial when a randomized trial is not available. American Journal
of Epidemiology, 183(8), 758-764. \url{https://doi.org/10.1093/aje/kwv254}

Hernan, M. A., Dahabreh, I. J., Dickerman, B. A., \& Swanson, S. A.
(2025). The target trial framework for causal inference from
observational data: Why and when is it helpful? Annals of Internal
Medicine, 178(3), 402-407. \url{https://doi.org/10.7326/ANNALS-24-01871}

Hernan, M. A., Wang, W., \& Leaf, D. E. (2022). Target trial emulation:
A framework for causal inference from observational data. JAMA, 328(24),
2446-2447. \url{https://doi.org/10.1001/jama.2022.21383}

Hudgens, M. G., \& Halloran, M. E. (2008). Toward causal inference with
interference. Journal of the American Statistical Association, 103(482),
832-842. \url{https://doi.org/10.1198/016214508000000292}

Kenny, D. A., Kashy, D. A., \& Cook, W. L. (2006). Dyadic Data Analysis.
Guilford Press.

Kim, H., \& Kim, J.-S. (2024). Extending the actor-partner
interdependence model to accommodate multivariate dyadic data using
latent variables. Psychological Methods, 29(5), 890-918.
\url{https://doi.org/10.1037/met0000531}

Ledermann, T., Macho, S., \& Kenny, D. A. (2011). Assessing mediation in
dyadic data using the Actor-Partner Interdependence Model. Structural
Equation Modeling, 18(4), 595-612.
\url{https://doi.org/10.1080/10705511.2011.607099}

Liu, L., Hudgens, M. G., Saul, B. C., Clemens, J. D., Ali, M., \& Emch,
M. E. (2019). Doubly robust estimation in observational studies with
partial interference. Stat, 8(1), e214. \url{https://doi.org/10.1002/sta4.214}

Mussavi Rizi, M., Dubin, J. A., \& Wallace, M. P. (2024). Dynamic
treatment regimes on dyadic networks. Statistics in Medicine, 43(30),
5944-5967. \url{https://doi.org/10.1002/sim.10278}

Qu, Z., Xiong, R., Liu, J., \& Imbens, G. W. (2026). Semiparametric
estimation of treatment effects in observational studies with
heterogeneous partial interference. Journal of Business \& Economic
Statistics. Advance online publication.
\url{https://doi.org/10.1080/07350015.2026.2635631}

RAND Center for the Study of Aging. (2025). RAND HRS Longitudinal File
2022 (V1): Documentation. RAND Corporation, Santa Monica, CA.

Robins, J. M., Hernan, M. A., \& Brumback, B. (2000). Marginal
structural models and causal inference in epidemiology. Epidemiology,
11(5), 550-560. \url{https://doi.org/10.1097/00001648-200009000-00011}

Saul, B. C., \& Hudgens, M. G. (2017). A recipe for inference: Start
with causal inference. Add interference. Mix well with R. Journal of
Statistical Software, 82(2), 1-21. \url{https://doi.org/10.18637/jss.v082.i02}

Simon-Tillaux, N., Manzanilla, A., \& Filleron, T. (2026). Target trial
emulation-A unifying approach for causal inference from observational
data. JAMA Network Open, 9(2), e2558192.
\url{https://doi.org/10.1001/jamanetworkopen.2025.58192}

Sonnega, A., Faul, J. D., Ofstedal, M. B., Langa, K. M., Phillips, J. W.
R., \& Weir, D. R. (2014). Cohort profile: The Health and Retirement
Study (HRS). International Journal of Epidemiology, 43(2), 576-585.
\url{https://doi.org/10.1093/ije/dyu067}

Tchetgen Tchetgen, E. J., \& VanderWeele, T. J. (2012). On causal
inference in the presence of interference. Statistical Methods in
Medical Research, 21(1), 55-75. \url{https://doi.org/10.1177/0962280210386779}

VanderWeele, T. J., \& Hernán, M. A. (2013). Causal inference under
multiple versions of treatment. Journal of Causal Inference, 1(1), 1-20.
\url{https://doi.org/10.1515/jci-2012-0002}

Zivich, P. N., \& Breskin, A. (2021). Machine learning for causal
inference: On the use of cross-fit estimators. Epidemiology, 32(3),
393-401. \url{https://doi.org/10.1097/EDE.0000000000001332}

\appendix
\setcounter{figure}{0}
\setcounter{equation}{0}
\counterwithin*{equation}{section}
\renewcommand{\theequation}{\thesection.\arabic{equation}}
\renewcommand{\thefigure}{A\arabic{figure}}
\clearpage
\section{Simulation Data-Generating Mechanisms and Longitudinal Extension}

\subsection{Primary joint-treatment simulation}

For each dyad, U, X1, and X2 were independently standard normal. Let
sc=1.0 under good positivity and 1.85 under near positivity. In the
linear condition, λ1=-0.25+sc(0.55X1-0.20X2+0.45U) and λ2 was obtained
by switching member labels. In the nonlinear condition, λ1 additionally
included sc\{0.55(X1²-1)-0.35 sin(X2)+0.25 U X1\}, with a symmetric
addition for member 2. Four-state treatment probabilities were
proportional to exp(ηs), where η00=0, η10=λ1, η01=λ2, and η11=λ1+λ2+κ.

Member-1 outcomes followed Y1=0.20+0.50A1+0.30A2+0.25A1A2+f1+ε1, with a
symmetric member-2 outcome. In the linear condition
f1=0.70U+0.60X1-0.50X2; nonlinear conditions added 0.70(X1²-1)-0.50
sin(X2)+0.40UX1. Outcome errors had within-dyad correlation 0.40. This
construction yields direct=0.50, spillover=0.30, interaction=0.25, and
joint=1.05.

\subsection{Longitudinal feedback simulation}

The secondary DGP used two joint treatment decisions. First-period
treatment affected L11 and L21; these intermediate variables then
entered the second-period treatment mechanism and final outcome. The
corrected nonlinear second-period treatment model used individual-level
quadratic terms L11² and L21² and did not contain sample-wide centering
quantities. Sustained regimes used the same joint state at both
decisions. The design crossed three sample sizes (750, 1,500, 3,000),
three κ values, two confounding forms, and two positivity conditions for
36 scenarios with 500 replications each.

\subsection{Longitudinal simulation results}

The secondary study crossed n=750, 1,500, and 3,000; κ=0, 0.8, and 1.6;
linear versus nonlinear mechanisms; and good versus near positivity, for
36 scenarios with 500 replications each. Conventional time-varying
regression conditioned directly on treatment-induced L11 and L21,
sequential joint IPW modeled the time-ordered joint treatment mechanism,
and sequential parametric g-computation standardized the outcome after
modeling the treatment-induced covariates. The g-computation
implementation did not compute analytic standard errors, so this
extension is summarized by bias and RMSE rather than treated as a
complete inference comparison.

Table A1. Longitudinal simulation performance averaged across sample
sizes, dependence levels, and four sustained-regime contrasts

\begin{longtable}[]{@{}
  >{\raggedright\arraybackslash}p{(\columnwidth - 8\tabcolsep) * \real{0.2000}}
  >{\raggedright\arraybackslash}p{(\columnwidth - 8\tabcolsep) * \real{0.2000}}
  >{\raggedright\arraybackslash}p{(\columnwidth - 8\tabcolsep) * \real{0.2000}}
  >{\raggedright\arraybackslash}p{(\columnwidth - 8\tabcolsep) * \real{0.2000}}
  >{\raggedright\arraybackslash}p{(\columnwidth - 8\tabcolsep) * \real{0.2000}}@{}}
\toprule\noalign{}
\begin{minipage}[b]{\linewidth}\raggedright
\textbf{Condition}
\end{minipage} & \begin{minipage}[b]{\linewidth}\raggedright
\textbf{Estimator}
\end{minipage} & \begin{minipage}[b]{\linewidth}\raggedright
\textbf{Mean \textbar bias|}
\end{minipage} & \begin{minipage}[b]{\linewidth}\raggedright
\textbf{Mean RMSE}
\end{minipage} & \begin{minipage}[b]{\linewidth}\raggedright
\textbf{Mean coverage}
\end{minipage} \\
\midrule\noalign{}
\endhead
\bottomrule\noalign{}
\endlastfoot
Linear / good & Conventional time-varying regression & 0.577 & 0.612 &
0.245 \\
Linear / good & Sequential joint IPW & 0.033 & 1.190 & 0.950 \\
Linear / good & Sequential g-computation & 0.003 & 0.119 & -\/- \\
Linear / near & Conventional time-varying regression & 0.577 & 0.613 &
0.248 \\
Linear / near & Sequential joint IPW & 0.111 & 3.370 & 0.899 \\
Linear / near & Sequential g-computation & 0.004 & 0.126 & -\/- \\
Nonlinear / good & Conventional time-varying regression & 0.500 & 0.545
& 0.283 \\
Nonlinear / good & Sequential joint IPW & 0.316 & 1.074 & 0.839 \\
Nonlinear / good & Sequential g-computation & 0.088 & 0.182 & -\/- \\
Nonlinear / near & Conventional time-varying regression & 0.477 & 0.525
& 0.303 \\
Nonlinear / near & Sequential joint IPW & 0.672 & 2.123 & 0.759 \\
Nonlinear / near & Sequential g-computation & 0.112 & 0.207 & -\/- \\
\end{longtable}

Note. Coverage is shown for conventional regression and sequential joint
IPW. Sequential g-computation is included as a bias/RMSE benchmark
because analytic standard errors were not implemented for this secondary
extension.

Under linear confounding and good positivity, conventional time-varying
regression had mean absolute bias 0.577, compared with 0.033 for
sequential joint IPW and 0.003 for g-computation. Sequential IPW was
nearly unbiased under the correctly specified linear treatment models
but had much larger RMSE and deteriorated under near positivity. The
exercise demonstrates target mismatch from direct adjustment for
treatment-induced confounders; because the main HRS application is a
point-treatment analysis, these results are intentionally kept outside
the main empirical argument.

\begin{figure}[H]
\centering
\includegraphics[width=0.88\textwidth]{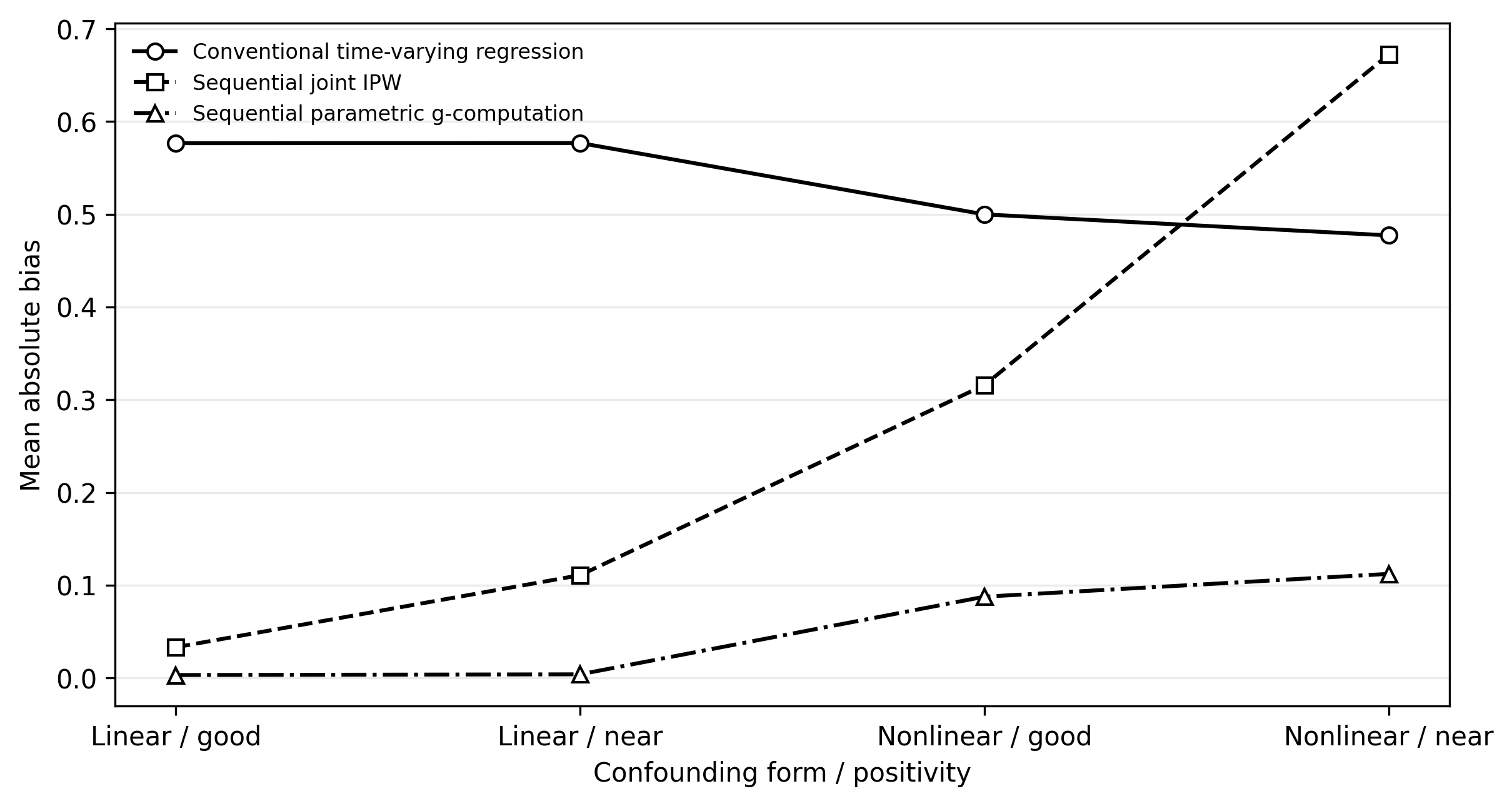}
\caption{Longitudinal simulation mean absolute bias. Conventional regression conditions on treatment-induced covariates and is biased for sustained direct, spillover, and joint effects; sequential methods recover the intended target under their working-model assumptions but can be imprecise under weak positivity.}
\end{figure}

\clearpage
\section{HRS Variable Map and Cohort Flow}

Table B1. RAND HRS variable families used in the empirical illustration

\begin{longtable}[]{@{}
  >{\raggedright\arraybackslash}p{(\columnwidth - 4\tabcolsep) * \real{0.3333}}
  >{\raggedright\arraybackslash}p{(\columnwidth - 4\tabcolsep) * \real{0.3333}}
  >{\raggedright\arraybackslash}p{(\columnwidth - 4\tabcolsep) * \real{0.3333}}@{}}
\toprule\noalign{}
\begin{minipage}[b]{\linewidth}\raggedright
\textbf{Construct}
\end{minipage} & \begin{minipage}[b]{\linewidth}\raggedright
\textbf{RAND variable family}
\end{minipage} & \begin{minipage}[b]{\linewidth}\raggedright
\textbf{Role}
\end{minipage} \\
\midrule\noalign{}
\endhead
\bottomrule\noalign{}
\endlastfoot
Unique respondent ID & HHIDPN & Link respondent records. \\
Wave-specific spouse/partner ID & SwHHIDPN & Construct reciprocal
baseline dyads and post-baseline partner-stability diagnostics. \\
Age & RwAGEY\_E & Baseline eligibility and adjustment. \\
Depressive symptoms & RwCESD & Baseline confounder and t+2 outcome. \\
Chronic conditions & RwCONDE & Baseline adjustment. \\
Labor-force status & RwLBRF & Baseline working eligibility and fully
retired treatment at t+1. \\
Gender & RAGENDER & Baseline adjustment. \\
Education & RAEDYRS & Baseline adjustment. \\
Race & RARACEM & Baseline adjustment. \\
Hispanic ethnicity & RAHISPAN & Baseline adjustment. \\
\end{longtable}

Eligible couple-trial counts by baseline year before retaining the first
eligible baseline per couple were: 609 (1998), 539 (2000), 425 (2002),
528 (2004), 509 (2006), 423 (2008), 656 (2010), 641 (2012), 457 (2014),
557 (2016), and 377 (2018). Retaining only the first eligible baseline
produced 2,400 unique couples.

\clearpage
\section{Estimation Details}

\subsection{Symmetric contrast construction}

For member 1 let $\mu_1(00)$, $\mu_1(10)$, $\mu_1(01)$, and $\mu_1(11)$ denote
the four regime means, and define the analogous quantities for member 2.
Member-1 direct, spillover, joint, and interaction contrasts are
$D_1=\mu_1(10)-\mu_1(00)$, $S_1=\mu_1(01)-\mu_1(00)$, $J_1=\mu_1(11)-\mu_1(00)$,
and $I_1=\mu_1(11)-\mu_1(10)-\mu_1(01)+\mu_1(00)$. For member 2, own
exposure corresponds to the second treatment component, so
$D_2=\mu_2(01)-\mu_2(00)$ and $S_2=\mu_2(10)-\mu_2(00)$. The symmetric actor
and spillover effects are $(D_1+D_2)/2$ and $(S_1+S_2)/2$. Joint and interaction
effects average the member-specific 11-versus-00 and second-difference
contrasts. This orientation-invariant construction is used in the HRS
analysis.

Because the estimands are linear combinations of regime means, the same
coefficient vectors can be applied to estimated influence-function
contributions. If $\phi_j(a)$ denotes the AIPW pseudo-outcome for member $j$
under state $a$, then the empirical influence contribution for the
symmetric actor effect is
$\tfrac12\{\phi_1(10)-\phi_1(00)+\phi_2(01)-\phi_2(00)\}$. The sample mean
yields the point estimate and the sample standard deviation divided by
$\sqrt n$ yields the dyad-level standard error used in the empirical
analysis.

\subsection{Cross-fitting algorithm used in the empirical illustration}

Step 1. Construct one baseline-defined record per eligible couple,
assign the four-level state 00/10/01/11, and retain the two
member-specific outcomes and shared baseline history.

Step 2. Assign complete dyads to K=5 folds using state-stratified random
allocation. No member of a held-out dyad is used to train that
fold\textquotesingle s nuisance models.

Step 3. On the training data, fit a four-category probability forest for
the joint treatment and separate regression forests for member-1 and
member-2 outcomes. The treatment probabilities are bounded below by
0.005 and renormalized to sum to one.

Step 4. For held-out dyads, predict the probability of each joint state.
For each outcome forest, create four counterfactual copies of the
held-out data in which the state is set in turn to 00, 10, 01, and 11;
predict the corresponding conditional outcome means.

Step 5. Construct state-specific AIPW pseudo-outcomes with out-of-fold
nuisance predictions. After all folds are complete, form actor,
spillover, joint, and interaction contrasts at the dyad level and
average them across dyads. Compute Wald intervals from the empirical
variance of the contrast pseudo-outcomes.

\clearpage
\section{Expanded Simulation Diagnostics}

\subsection{Primary simulation: dependence gradient under correct linear specification}
The compact main-text table averages over $\kappa$ and sample size. The dependence gradient is especially informative for the factorization question. Under linear confounding and good positivity, fitted separate-PS IPW mean absolute bias was approximately 0.004 at $\kappa=0$, 0.259 at $\kappa=0.8$, and 0.596 at $\kappa=1.6$. The true-marginal factorization calculation in Table~3A gives 0.000, 0.251, and 0.596 when the true marginal propensities are multiplied, and the fitted-logit population contrast-bias vectors closely track the true-marginal vectors component by component. Thus the large-dependence result is overwhelmingly attributable to false factorization in this DGP rather than to marginal-logit misspecification, and the agreement is not an artifact of averaging offsetting contrast biases. Joint multinomial IPW remained near zero at approximately 0.007, 0.004, and 0.010, while joint parametric AIPW remained approximately 0.006, 0.004, and 0.008.

The APIM-style outcome regression also remained near unbiased in these
scenarios. That behavior is essential to the interpretation of the
study: treatment dependence does not create bias merely because
treatments are correlated. Bias arises when the working models used by
an estimator fail to represent the relevant outcome or treatment
mechanism. A correctly specified outcome regression can remain valid
under conditional treatment dependence, just as a correctly specified
joint propensity estimator can remain valid without modeling the outcome
surface. The causal lesson is therefore about alignment between nuisance
assumptions and the identification strategy, not about replacing APIM
whenever partners\textquotesingle{} exposures are correlated.

\subsection{Nonlinearity, sample size, and machine-learning nuisance estimation}

Under nonlinear confounding and good positivity, cross-fitted RF-AIPW
had mean absolute bias 0.048 averaged across sample sizes, κ values, and
the four estimands, compared with 0.339 for APIM-style outcome
regression, 0.362 for joint multinomial IPW, and 0.377 for parametric
AIPW. RF-AIPW bias decreased as sample size increased: approximately
0.073 at n=500, 0.046 at n=1,000, and 0.024 at n=3,000. This pattern is
consistent with the intended role of flexible nuisance estimation: as
information increases, the forest can approximate nonlinear confounding
functions that are omitted from the parametric working models.

Near positivity changes the picture. Under nonlinear near-positivity
conditions, RF-AIPW retained the lowest average absolute bias among the
methods using estimated nuisance functions but its average coverage fell
well below nominal. The known-propensity IPW benchmark demonstrates why:
even when the true joint treatment probabilities are known, small
probabilities create highly variable weights and large RMSE. Flexible
nuisance learning can reduce model bias but cannot create observations
in unsupported regions of the joint treatment space. This distinction is
important for interpreting real data because improved predictive fit of
a propensity model may coexist with poor overlap.

\subsection{Additional longitudinal diagnostic: target mismatch from conventional adjustment}

The longitudinal DGP makes L11 and L21 consequences of first-period
treatment and causes of later treatment and outcome. As shown in
Appendix A, conventional adjustment therefore targets a different
pathway from the sustained-regime total effect. The interaction happens
to be less biased in the linear DGP because the treatment-induced
pathway is additive in that design; this should not be generalized to
settings with mediated interaction.

This result should not be summarized as \textquotesingle never adjust
for time-varying covariates.\textquotesingle{} The correct statement is
target-specific. If the estimand is a controlled direct effect that
deliberately fixes an intermediate variable, conditioning may be part of
the target definition. If the estimand is the total effect of a
sustained treatment regime, however, directly conditioning on
treatment-induced covariates can block pathways that belong to the
effect while simultaneously trying to control later confounding. The
target trial determines which interpretation is appropriate.

\subsection{Probability-floor sensitivity in the primary simulation}

Because the main simulation also applied a 0.005 floor, we repeated a
focused linear/near-positivity calibration at n=1,000 and κ=1.6 with 500
replications while varying the floor for joint parametric AIPW. Averaged
across the four causal contrasts, the untruncated estimator had mean
absolute bias 0.0079, RMSE 0.202, and coverage 0.951. The corresponding
values were 0.0057, 0.188, and 0.949 at a 0.005 floor; 0.0035, 0.178,
and 0.948 at 0.010; and 0.0012, 0.161, and 0.945 at 0.020. This
controlled result is compatible with the HRS asymmetry in Table 8
because the same floor operation can have different consequences for Y/g
and (Y-m)/g contributions. Section 6.6 documents that both empirical
estimators used the same renormalized propensity matrix.

This calibration is intentionally not used to justify a particular floor
in HRS. Its favorable behavior occurs under a controlled linear
condition with correctly specified parametric nuisance models. The
empirical HRS sensitivity in Table 8 is more consequential because model
error and sparse 11-cell outcome information are both plausible.
Accordingly, the manuscript treats the floor as a tuning sensitivity to
be disclosed, not as a regularization parameter selected from the
simulation.

\clearpage
\section{HRS Diagnostics and Sensitivity Analyses}

\subsection{Baseline cohort and outcome scale}

The primary HRS cohort contains 2,400 couples selected at their first
eligible baseline between 1998 and 2018. The four joint retirement
states were 2,135 couples in 00, 137 in 10, 88 in 01, and 40 in 11. The
empirical contrast therefore relies on a highly unbalanced four-state
treatment distribution. Mean t+2 CES-D scores were approximately 0.94
for member 1 and 1.01 for member 2. RAND constructs RwCESD from eight
binary symptom indicators, so the effect estimates are interpretable as
changes in the expected number of depressive-symptom indicators on that
scale.

The first-percentile observed-state propensity was 0.019 for the
parametric joint model and 0.009 for the cross-fitted random-forest
joint model. The minimum was approximately 0.005 for both approaches
because the analysis used a 0.005 numerical floor. Roughly 3\% to 4\% of
dyads had observed-state propensities below 0.05, and maximum inverse
weights were near 200. These values explain why the 11-related joint and
interaction contrasts are much less stable than the dominant 00 regime
might suggest from marginal summaries alone.

\subsection{Stable-partner sensitivity analysis}

As a descriptive sensitivity analysis, we restricted the cohort to
couples who remained linked to the same partner through t+2. Because
that rule uses future partnership information, it is not used to define
the primary cohort. The restriction produced 2,313 couples;
treatment-state counts were 2,058, 133, 83, and 39 for 00, 10, 01, and
11, respectively. Propensity diagnostics were nearly unchanged.

Table E1. Cross-fitted RF-AIPW estimates in the primary and
stable-partner HRS cohorts

\begin{longtable}[]{@{}
  >{\raggedright\arraybackslash}p{(\columnwidth - 8\tabcolsep) * \real{0.2000}}
  >{\raggedright\arraybackslash}p{(\columnwidth - 8\tabcolsep) * \real{0.2000}}
  >{\raggedright\arraybackslash}p{(\columnwidth - 8\tabcolsep) * \real{0.2000}}
  >{\raggedright\arraybackslash}p{(\columnwidth - 8\tabcolsep) * \real{0.2000}}
  >{\raggedright\arraybackslash}p{(\columnwidth - 8\tabcolsep) * \real{0.2000}}@{}}
\toprule\noalign{}
\begin{minipage}[b]{\linewidth}\raggedright
\textbf{Estimand}
\end{minipage} & \begin{minipage}[b]{\linewidth}\raggedright
\textbf{Primary N=2,400}
\end{minipage} & \begin{minipage}[b]{\linewidth}\raggedright
\textbf{95\% CI}
\end{minipage} & \begin{minipage}[b]{\linewidth}\raggedright
\textbf{Stable-partner N=2,313}
\end{minipage} & \begin{minipage}[b]{\linewidth}\raggedright
\textbf{95\% CI}
\end{minipage} \\
\midrule\noalign{}
\endhead
\bottomrule\noalign{}
\endlastfoot
Actor & 0.141 & {[}-0.316, 0.599{]} & 0.202 & {[}-0.262, 0.667{]} \\
Spillover & 0.309 & {[}-0.280, 0.898{]} & 0.388 & {[}-0.238, 1.013{]} \\
Joint & -0.069 & {[}-0.481, 0.342{]} & -0.154 & {[}-0.576, 0.269{]} \\
Interaction & -0.520 & {[}-1.436, 0.397{]} & -0.743 & {[}-1.687,
0.200{]} \\
\end{longtable}

Note. The stable-partner cohort is a descriptive sensitivity analysis
only. It conditions on retaining the baseline partner through t+2 and is
therefore not used to define the primary target-trial cohort. All
intervals include zero.

The stable-partner estimates are similar in sign and broad magnitude to
the primary results, although the interaction becomes more negative.
Because the restriction removes only 87 couples and leaves the rare 11
cell almost unchanged, it does not resolve the principal overlap
limitation. The sensitivity analysis therefore supports the conclusion
that post-baseline partner turnover is not the main source of empirical
instability, while reinforcing the decision not to define the primary
cohort using future partnership status.

\clearpage
\section{Applied Reporting Checklist for Dyadic Causal Analyses}

The following checklist summarizes the minimum information needed to
distinguish a causal dyadic analysis from an actor-partner regression
that is merely described in causal language. It is intended as an
applied companion to the formal assumptions in the main text and to the
more general TARGET reporting principles.

• Scientific intervention: Define each member-specific exposure and the
joint treatment states or policies being compared.

• Dyadic unit: State how pairs are formed and whether member roles are
distinguishable or symmetrically reported.

• Time zero: Align eligibility, baseline covariate measurement,
treatment assignment, and start of follow-up at the dyad level.

• Potential outcomes: Name the member-specific potential outcomes and
write the actor/direct, partner/spillover, joint, and interaction
contrasts explicitly.

• Interference boundary: Justify why interference is allowed within the
dyad but assumed absent between dyads, or describe the broader cluster
structure if that assumption is not plausible.

• Exchangeability set: Describe measured histories for both members and
shared causes. Avoid selecting covariates solely by statistical
significance.

• Joint positivity: Report counts for all joint states required by the
estimand and diagnostics for estimated observed-state probabilities and
weights.

• Treatment mechanism: Explain whether the joint state is modeled
directly. If separate propensities are multiplied, justify the
conditional-independence assumption.

• Outcome model and standardization: State whether effects are
conditional coefficients or standardized marginal causal contrasts and
identify the target population used for averaging.

• Longitudinal feedback: Determine whether any adjusted variable is
affected by prior treatment. If so, explain the g-method or alternative
estimand used to address it.

• Censoring and dyad dissolution: Define how death, attrition, missing
outcome data, and partner changes enter the target trial and analysis.

• Nuisance estimation: Describe parametric or machine-learning
algorithms, cross-fitting, tuning, and any probability truncation or
bounding.

• Inference: Use dyads as independent analysis units unless a larger
sampling cluster requires additional adjustment. State how standard
errors are obtained.

• Sensitivity and triangulation: Report estimator sensitivity, overlap
limitations, and substantively motivated alternative cohort or modeling
decisions without treating agreement as proof of identification.

• Causal language: Match the strength of the conclusion to the
assumptions and support. A wide interval in a sparse joint state should
not be summarized as evidence of no effect.

\clearpage
\section{Derivations for the Main-Text Identities}\label{sec:proofs}

This appendix gives full derivations of the exact identities used in the main text. The arguments are population identities and do not rely on large-sample approximations. Throughout, expectations are assumed finite, and every working propensity appearing in a denominator is positive on the support relevant to the target.

\subsection{Proof of Proposition~\ref{prop:apim-equivalence}: APIM coefficient equivalence}

\begin{proof}
\textit{Step 1: identification of regime means.}
For a fixed joint state $a=(a_1,a_2)$, define $\mu_1(a)=\E\{Y_1(a)\}$ and $m_1(a,H)=\E(Y_1\mid A=a,H)$. By iterated expectation,
\[
\mu_1(a)=\E_H\!\left[\E\{Y_1(a)\mid H\}\right].
\]
Joint conditional exchangeability gives
\[
\E\{Y_1(a)\mid H\}=\E\{Y_1(a)\mid A=a,H\},
\]
and consistency gives $Y_1(a)=Y_1$ among dyads with $A=a$. Therefore
\begin{equation}
\mu_1(a)=\E_H\!\left[\E\{Y_1\mid A=a,H\}\right]=\E_H\{m_1(a,H)\}.
\label{eq:gformula-proof}
\end{equation}
No functional-form restriction is required for this identification step.

\textit{Step 2: direct and spillover contrasts.}
Under the homogeneous treatment-related conditional mean,
\[
m_1(a_1,a_2,H)=\alpha_1(H)+\beta_Aa_1+\beta_Pa_2+\beta_Ia_1a_2.
\]
The member-1 direct contrast holds the partner exposure at zero. Using \eqref{eq:gformula-proof},
\begin{align*}
D_1
&=\E_H\{m_1(1,0,H)-m_1(0,0,H)\}\\
&=\E_H(\beta_A)=\beta_A.
\end{align*}
Likewise,
\begin{align*}
S_1
&=\E_H\{m_1(0,1,H)-m_1(0,0,H)\}\\
&=\E_H(\beta_P)=\beta_P.
\end{align*}
The baseline surface $\alpha_1(H)$ cancels exactly in both contrasts.

\textit{Step 3: interaction and joint contrasts.}
For the additive interaction,
\begin{align*}
I_1
&=\E_H\{m_1(1,1,H)-m_1(1,0,H)-m_1(0,1,H)+m_1(0,0,H)\}\\
&=\E_H(\beta_I)=\beta_I.
\end{align*}
The $11$-versus-$00$ joint contrast is
\begin{align*}
J_1
&=\E_H\{m_1(1,1,H)-m_1(0,0,H)\}\\
&=\E_H(\beta_A+\beta_P+\beta_I)\\
&=\beta_A+\beta_P+\beta_I.
\end{align*}
This proves the four equalities in Proposition~\ref{prop:apim-equivalence}.

\textit{Step 4: member 2 and symmetric reporting.}
Repeating the argument after switching member labels gives the member-2 analogues. If a label-invariant estimand is defined as the average over both orientations, linearity of expectation implies that the symmetric causal contrast is the average of the two corresponding member-specific contrasts. No additional identification assumption is introduced by this final averaging step.

\textit{Step 5: effect modification.}
If the treatment-related coefficients vary with $H$, write $\beta_A(H)$, $\beta_P(H)$, and $\beta_I(H)$. The same within-$H$ subtraction remains valid, but standardization yields
\[
D_1=\E\{\beta_A(H)\},\qquad
S_1=\E\{\beta_P(H)\},\qquad
I_1=\E\{\beta_I(H)\},
\]
and
\[
J_1=\E\{\beta_A(H)+\beta_P(H)+\beta_I(H)\}.
\]
A single regression coefficient generally represents a projection-weighted conditional association rather than one of these marginal averages. Hence the critical restriction in Proposition~\ref{prop:apim-equivalence} is homogeneity of the treatment-related part of the conditional mean, not simplicity of $\alpha_1(H)$. If the $A_1A_2$ term is omitted, reconstructing the joint effect as actor plus partner further requires $\beta_I(H)=0$ on the target support.
\end{proof}

\subsection{Proof of Proposition~\ref{prop:factorization}: factorization geometry and IPW bias}

\begin{proof}
\textit{Step 1: one-dimensional discrepancy at fixed margins.}
Let
\[
p_1(H)=P(A_1=1\mid H),\qquad p_2(H)=P(A_2=1\mid H),
\]
and define the product-of-marginals law
\[
q_{11}=p_1p_2,\quad q_{10}=p_1(1-p_2),\quad q_{01}=(1-p_1)p_2,\quad q_{00}=(1-p_1)(1-p_2).
\]
Set
\[
\delta(H)=g_{11}(H)-q_{11}(H).
\]
Because $g$ and $q$ have the same two member-specific margins,
\begin{align*}
g_{11}&=q_{11}+\delta,\\
g_{10}&=p_1-g_{11}=q_{10}-\delta,\\
g_{01}&=p_2-g_{11}=q_{01}-\delta,\\
g_{00}&=1-p_1-p_2+g_{11}=q_{00}+\delta.
\end{align*}
Thus, in cell order $(00,10,01,11)$,
\begin{equation}
g-q=\delta(+1,-1,-1,+1).
\label{eq:fixed-margin-direction}
\end{equation}
The fixed-margin discrepancy is therefore one-dimensional for a binary dyad.

\textit{Step 2: covariance, conditional independence, and feasibility.}
Since $A_1A_2=1$ only in cell $11$,
\begin{align*}
\Cov(A_1,A_2\mid H)
&=\E(A_1A_2\mid H)-\E(A_1\mid H)\E(A_2\mid H)\\
&=g_{11}-p_1p_2\\
&=\delta(H).
\end{align*}
Consequently, $\delta(H)=0$ if and only if $g=q$, equivalently $A_1\perp A_2\mid H$. Nonnegativity of the four cells yields the sharp interval
\begin{equation}
-\min\{q_{00},q_{11}\}\le \delta(H)\le \min\{q_{10},q_{01}\}.
\label{eq:delta-bounds}
\end{equation}
Indeed, the lower bounds follow from $g_{00},g_{11}\ge0$ and the upper bounds from $g_{10},g_{01}\ge0$.

\textit{Step 3: total variation.}
Under the convention $\TV(g,q)=\tfrac12\lVert g-q\rVert_1$, conditional total variation is one half of the $\ell_1$ distance between the two four-cell probability vectors. From \eqref{eq:fixed-margin-direction}, every cell differs in absolute value by $|\delta(H)|$. Hence
\begin{align*}
\TV\{g(\cdot\mid H),q(\cdot\mid H)\}
&=\frac12\sum_a|g_a(H)-q_a(H)|\\
&=\frac12\{4|\delta(H)|\}\\
&=2|\delta(H)|.
\end{align*}

\textit{Step 4: regime-specific Horvitz--Thompson bias.}
For a fixed state $a$, assume $q_a(H)>0$ and the displayed expectation is integrable. Let $m_a(H)=\E(Y\mid A=a,H)$. Conditioning on $H$ gives
\begin{align*}
\E\!\left\{\frac{\mathbf 1(A=a)Y}{q_a(H)}\,\middle|\,H\right\}
&=\frac{1}{q_a(H)}\E\{\mathbf 1(A=a)Y\mid H\}\\
&=\frac{g_a(H)m_a(H)}{q_a(H)}.
\end{align*}
Therefore the population Horvitz--Thompson functional based on the factorized denominator is
\[
\widetilde\mu_a=\E\!\left\{\frac{g_a(H)m_a(H)}{q_a(H)}\right\}.
\]
Subtracting $\mu_a=\E\{m_a(H)\}$ gives
\begin{align*}
\widetilde\mu_a-\mu_a
&=\E\!\left\{\frac{\{g_a(H)-q_a(H)\}m_a(H)}{q_a(H)}\right\}\\
&=\E\!\left\{\frac{s_a\delta(H)m_a(H)}{q_a(H)}\right\},
\end{align*}
which is exact. No outcome-model approximation or large-sample argument is involved.

\textit{Step 5: arbitrary linear contrasts and cancellation.}
Let $C=\sum_a c_a\mu_a$ and $\widetilde C=\sum_a c_a\widetilde\mu_a$. By linearity,
\begin{align*}
\widetilde C-C
&=\sum_a c_a\E\!\left\{\frac{s_a\delta(H)m_a(H)}{q_a(H)}\right\}\\
&=\E\!\left[\delta(H)\sum_a\frac{c_as_am_a(H)}{q_a(H)}\right].
\end{align*}
Thus $\delta(H)\neq0$ does not logically force bias for every contrast: the signed, outcome-weighted terms can in principle cancel after averaging over $H$. The proposition identifies the exact condition under which residual dependence is transmitted into a chosen contrast. The displayed identity is for the unnormalized Horvitz--Thompson functional used in the simulation; normalized or stabilized H\'ajek weighting has an additional denominator effect.
\end{proof}

\subsection{Proof of Corollary~\ref{cor:aipw-factorization}: AIPW error decomposition}

\begin{proof}
\textit{Step 1: population AIPW functional.}
Let $\bar g_a(H)>0$ and $\bar m_a(H)$ be deterministic probability limits of working nuisance estimators. Define
\begin{equation}
\Psi_a(\bar g,\bar m)
=\E\!\left[\bar m_a(H)+\frac{\mathbf 1(A=a)\{Y-\bar m_a(H)\}}{\bar g_a(H)}\right].
\label{eq:aipw-pop}
\end{equation}
Conditioning on $H$,
\begin{align*}
\E\!\left[\frac{\mathbf 1(A=a)\{Y-\bar m_a(H)\}}{\bar g_a(H)}\,\middle|\,H\right]
&=\frac{g_a(H)}{\bar g_a(H)}\{m_a(H)-\bar m_a(H)\}.
\end{align*}
Substituting into \eqref{eq:aipw-pop},
\[
\Psi_a(\bar g,\bar m)=\E\!\left[\bar m_a+\frac{g_a}{\bar g_a}(m_a-\bar m_a)\right].
\]

\textit{Step 2: exact error product.}
Subtracting $\mu_a=\E(m_a)$ and collecting terms gives
\begin{align*}
\Psi_a(\bar g,\bar m)-\mu_a
&=\E\!\left[\left(1-\frac{g_a}{\bar g_a}\right)(\bar m_a-m_a)\right]\\
&=\E\!\left[\frac{(\bar g_a-g_a)(\bar m_a-m_a)}{\bar g_a}\right].
\end{align*}
The two population double-robustness cases are immediate: if $\bar m_a=m_a$ almost surely, the error is zero for any positive $\bar g_a$; if $\bar g_a=g_a$ almost surely, the error is zero for any integrable $\bar m_a$.

\textit{Step 3: false-factorization specialization.}
Set $\bar g_a=q_a$. Proposition~\ref{prop:factorization} gives $g_a=q_a+s_a\delta$, so
\[
1-\frac{g_a}{q_a}=-\frac{s_a\delta}{q_a}.
\]
Therefore
\begin{equation}
\Psi_a(q,\bar m)-\mu_a
=-\E\!\left[\frac{s_a\delta(H)\{\bar m_a(H)-m_a(H)\}}{q_a(H)}\right].
\label{eq:aipw-factorization-proof}
\end{equation}
Conditional treatment dependence is therefore not by itself a source of AIPW bias: it matters through its product with outcome-regression error. For $C=\sum_a c_a\mu_a$,
\[
\Psi_C-C=\sum_a c_a\E\!\left[\left(1-\frac{g_a}{\bar g_a}\right)(\bar m_a-m_a)\right].
\]

\textit{Step 4: deterministic product bound.}
Whenever the two factors are square-integrable, Cauchy--Schwarz gives
\begin{equation}
|\Psi_a-\mu_a|
\le \left\|1-\frac{g_a}{\bar g_a}\right\|_2\,\|\bar m_a-m_a\|_2.
\label{eq:product-bound}
\end{equation}
Equation~\eqref{eq:product-bound} makes the product structure explicit, but it is not a semiparametric rate theorem. Cross-fitting, stochastic nuisance estimators, and asymptotic linearity require additional product-rate and regularity conditions. The main paper uses the result only as a population identity explaining the simulation's double-robustness pattern.
\end{proof}

\subsection{The softmax treatment DGP induces nonlinear member-specific marginals}

The fitted Separate-PS simulation arm uses main-effects Bernoulli logistic regressions. The following calculation explains why that arm is not, by itself, a pure factorization experiment when $\kappa\neq0$, and why the true-marginal factorization diagnostic is needed.

\begin{proof}[Derivation]
Under the four-state softmax design,
\[
\eta_{00}=0,\qquad \eta_{10}=\lambda_1(H),\qquad \eta_{01}=\lambda_2(H),\qquad \eta_{11}=\lambda_1(H)+\lambda_2(H)+\kappa.
\]
Let
\[
Z=1+e^{\lambda_1}+e^{\lambda_2}+e^{\lambda_1+\lambda_2+\kappa}.
\]
Then
\begin{align*}
P(A_1=1\mid H)
&=\frac{e^{\lambda_1}+e^{\lambda_1+\lambda_2+\kappa}}{Z}
 =\frac{e^{\lambda_1}\{1+e^{\lambda_2+\kappa}\}}{Z},\\
P(A_1=0\mid H)
&=\frac{1+e^{\lambda_2}}{Z}.
\end{align*}
Taking the log odds cancels $Z$:
\begin{equation}
\logit P(A_1=1\mid H)
=\lambda_1(H)+\log\!\left\{\frac{1+e^{\lambda_2(H)+\kappa}}{1+e^{\lambda_2(H)}}\right\}.
\label{eq:marginal-logit}
\end{equation}
The member-2 expression follows by interchanging $\lambda_1$ and $\lambda_2$. When $\kappa=0$, the logarithmic correction in \eqref{eq:marginal-logit} is identically zero, so the marginal logit for $A_1$ is exactly $\lambda_1(H)$. When $\kappa\neq0$, define
\[
h_\kappa(x)=\log(1+e^{x+\kappa})-\log(1+e^x).
\]
Its derivative is
\[
h_\kappa'(x)=\expit(x+\kappa)-\expit(x),
\]
which is generally nonconstant in $x$. Hence, even when $\lambda_1(H)$ and $\lambda_2(H)$ are linear in covariates, the member-specific marginal logit is generally nonlinear in $H$ for $\kappa\neq0$. A main-effects logistic fit can therefore combine marginal-model misspecification with the separate factorization error. Using the exact true marginals removes the first component while retaining the second, which is precisely the role of the true-marginal factorization diagnostic.
\end{proof}

\end{document}